\documentclass[10pt,twocolumn]{IEEEtran}

\usepackage{amsmath,graphics,amssymb,epsfig,subfigure,color,cite}

\usepackage{array}
\usepackage{multirow}
\usepackage{pifont}
\usepackage{arydshln}
\usepackage{xcolor}
\usepackage{enumerate}
\usepackage{bm}
\usepackage{float}
\usepackage{makecell}
\usepackage[right=0.625in, left=0.625in, top=0.7in, bottom=1.1in]{geometry}

\usepackage{lipsum}
\usepackage{capt-of}

\usepackage{algorithm}
\usepackage{algpseudocode}
\usepackage{algcompatible}
\algblockdefx{WHILE}{ENDWHILE}[1]%
  {\textbf{while }#1 \textbf{}}%
  {\textbf{end}}
\algblockdefx{FORP}{ENDFORP}[1]%
  {\textbf{for }#1 \textbf{do in parallel}}%
  {\textbf{end}}
\algblockdefx{FOR}{ENDFOR}[1]%
  {\textbf{for }#1 \textbf{do}}%
  {\textbf{end}}
\algblockdefx{IF}{ENDIF}[1]%
  {\textbf{if }#1 \textbf{}}%
  {\textbf{end}} 
\algblockdefx{ELSIF}{ENDIF}[1]%
  {\textbf{else if }#1 }%
  {\textbf{end}}
\algnewcommand\algorithmicprocedure{\textbf{function}}
\algnewcommand\FUNC{\item[\algorithmicprocedure]}%
\algnewcommand\algorithmicendprocedure{\textbf{end function}}
\algnewcommand\ENDFUNC{\item[\algorithmicendprocedure]}%
 \usepackage{setspace}
\let\Algorithm\algorithm
\renewcommand\algorithm[1][]{\Algorithm[#1]\setstretch{1.4}}

\usepackage{bbm}
\usepackage{bigints}
\usepackage{adjustbox}

\newtheorem{thm}{Theorem}
\newtheorem{lem}{Lemma}
\newtheorem{cor}{Corollary}

\newcommand{\cmark}{\ding{51}}
\newcommand{\xmark}{\ding{55}}

\floatname{algorithm}{Algorithm}

\usepackage{enumitem}

\newcommand{\argmin}{\operatornamewithlimits{argmin}}
\makeatletter
\newcommand{\vast}{\bBigg@{4.5}}
\newcommand{\Vast}{\bBigg@{7.5}}
\makeatother

\begin{document}
\title{\fontsize{23}{28}\selectfont Towards Optimal Semantic Communications: Reconsidering the Role of Semantic Feature Channels 
}

\author{Yongjeong Oh, \IEEEmembership{Graduate Student Member,~IEEE}, Jihong Park, \IEEEmembership{Senior Member,~IEEE}, \\ Jinho Choi, \IEEEmembership{Fellow,~IEEE}, and Yo-Seb Jeon, \IEEEmembership{Member,~IEEE}
	    \thanks{Yongjeong Oh and Yo-Seb Jeon are with the Department of Electrical Engineering, POSTECH, Pohang, Gyeongbuk 37673, Republic of Korea (email: yongjeongoh@postech.ac.kr, yoseb.jeon@postech.ac.kr).}
        \thanks{Jihong Park is with the Information Systems Technology and Design Pillar, Singapore University of Technology and Design, Singapore 487372 (email: jihong\_park@sutd.edu.sg).}
        \thanks{Jinho Choi is with the School of Electrical and Mechanical Engineering, The University of Adelaide, SA 5005, Australia (email: jinho.choi@adelaide.edu.au).}
        }
	\vspace{-2mm} 
	
	\maketitle
	\vspace{-12mm}
\begin{abstract} 
This paper investigates the optimization of transmitting the encoder outputs, termed semantic features (SFs), in semantic communication (SC). We begin by modeling the entire communication process from the encoder output to the decoder input, encompassing the physical channel and all transceiver operations, as the SF channel, thereby establishing an encoder--SF channel--decoder pipeline. In contrast to prior studies that assume a fixed SF channel, we note that the SF channel is configurable, as its characteristics are shaped by various transmission and reception strategies, such as power allocation. Based on this observation, we formulate the SF channel optimization problem under a mutual information constraint between the SFs and their reconstructions, and analytically derive the optimal SF channel under a linear encoder-decoder structure and Gaussian source assumption. \textcolor{black}{Building on this analysis, we propose a joint optimization framework for the encoder-decoder and SF channel applicable to both analog and digital SC systems.}
To realize the optimized SF channel, we also propose a physical-layer calibration strategy that enables real-time power control and adaptation to varying channel conditions. Simulation results demonstrate that the proposed SF channel optimization achieves superior task performance under various communication environments. 
\end{abstract}

\begin{IEEEkeywords}
    Semantic communication, joint source-channel coding, end-to-end training, rate-distortion, adaptive power and modulation control
\end{IEEEkeywords}

\section{Introduction}\label{Sec:Intro}
With recent advances in artificial intelligence (AI), next-generation wireless networks are anticipated to support emerging intelligent applications such as digital twins, intelligent transportation, and collaborative robotics \cite{10845817,SemCom_Survey_1}. These applications often require frequent and large-scale data exchange, which places a heavy burden on existing communication systems designed for the accurate transmission of raw data. Fortunately, with AI systems increasingly equipped with perception and decision-making capabilities, the communication goals in such intelligent services are gradually changing from accurately reproducing raw data to conveying information that contributes to performing specific tasks. This shift has led to the emergence of a new communication paradigm known as semantic communication (SC), which focuses on transmitting only task-relevant information, thereby improving bandwidth efficiency and robustness against channel perturbations \cite{SemCom_Survey_2,SemCom_Survey_3,Semantic_Jihong_1}.


\textcolor{black}{Among various implementations of SC, two representative approaches have been widely studied: deep separate source-channel coding (DeepSSCC) and deep joint source-channel coding (DeepJSCC). In DeepSSCC, the source and channel coding schemes are designed independently~\cite{10175391,R1_2_4_1}. This separation provides high design flexibility, allowing source and channel coding schemes to be freely combined. However, the decoupled design prevents joint optimization across the source and channel, thereby limiting the achievable task performance~\cite{JSCC_RD_Gunduz}. In contrast, DeepJSCC jointly designs the source and channel coding schemes~\cite{DeepJSCC}. This joint design principle reduces flexibility, as the encoder and decoder are typically tailored to specific source distributions and channel conditions. Nevertheless, the ability to perform end-to-end optimization often leads to superior task performance, which has driven extensive research efforts toward DeepJSCC.}

\textcolor{black}{A typical approach in DeepJSCC is to transform the input data at the transmitter into a latent representation, often referred to as  the \textit{semantic features (SFs)}.}
These SFs are then transmitted through either analog or digital communication, depending on whether the SFs are conveyed in continuous or discrete forms. In both cases, the transmitted SFs inevitably experience degradation due to channel fading and noise. At the receiver, the decoder performs the designated task based on the received SFs. We note that the entire process from the encoder output to the decoder input, including the physical channel, modulation, power control, and other transceiver operations, can be collectively modeled as an equivalent channel, referred to as the \textit{SF channel}. 
This abstraction allows us to interpret the DeepJSCC framework as an \textit{encoder--SF channel--decoder} pipeline, as illustrated in Fig.~\ref{fig:SFC}, where the SF channel represents how the transmitted SFs are distorted and transformed through the communication process. 

\begin{figure*}[t]
    \centering
    \subfigure[Analog SC pipeline with the SF channel $p(\hat{\bm z}|{\bm z})$]{
    {\epsfig{file=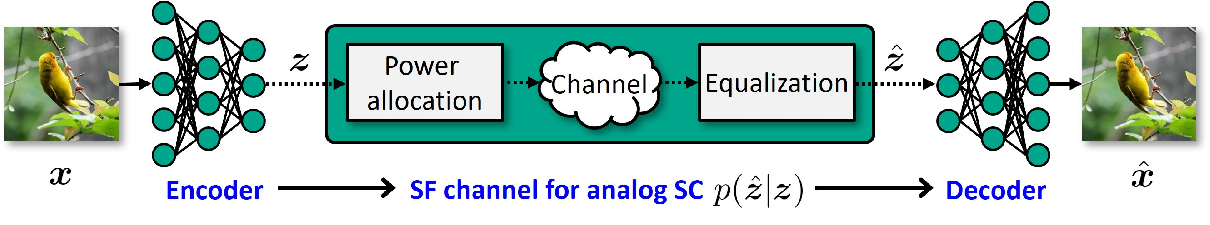, width=9.5cm}}
    }\vspace{0.05cm}
    \subfigure[Digital SC pipeline with the SF channel $p(\hat{\bm b}|{\bm b})$]{
    {\epsfig{file=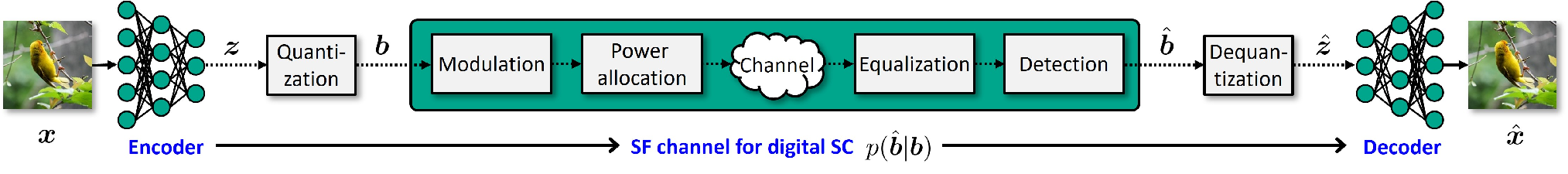, width=17cm}}
    }\\
    {\caption{The overall analog and digital SC pipelines: Encoder--SF channel--decoder.}\label{fig:SFC}}\vspace{-3mm}
\end{figure*}

\textcolor{black}{
Over the years, extensive research has focused on encoder-decoder (enc-dec) optimization under a single fixed SF channel assumption \cite{DeepJSCC,DeepJSCC_Q,OFDM_SC,10278812,NECST}.}
For example, earlier works modeled the SF channel as an additive white Gaussian noise (AWGN) or Rayleigh fading channel with a fixed signal-to-noise ratio (SNR) level \cite{DeepJSCC,DeepJSCC_Q}. 
\textcolor{black}{
This approach was further extended by incorporating multipath fading effects and multi-antenna systems into the SF channel \cite{OFDM_SC,10278812}.} 
Also, in \cite{NECST}, to capture bit-level transmission characteristics of digital communication, the SF channel was characterized as a binary symmetric channel (BSC) with a fixed bit-flip probability. These studies have successfully demonstrated the effectiveness of enc-dec-centric optimization by achieving high task performance under the assumed SF channel. Nevertheless, the resulting enc-dec often suffers from significant performance degradation when the actual SF channel deviates from the trained one. 

\textcolor{black}{
To address this challenge, several studies have attempted to enhance robustness or adaptability by training the enc-dec under multiple SF channels \cite{DeepJSCC_Analog_1,Joohyuk,JSCC_universal,10436878,10382545}.}
In \cite{DeepJSCC_Analog_1}, multiple SF channels were configured under an AWGN or Rayleigh fading channel with varying SNR levels. 
For digital communications, multiple SF channels were generated by sampling the bit-flip probabilities of the BSCs \cite{Joohyuk} or by changing both the SNR levels and modulation orders under an AWGN channel \cite{JSCC_universal}.
\textcolor{black}{Further, in \cite{10436878} and \cite{10382545}, multiple SF channels were induced by varying the number of transmitted SFs for rate-adaptive SCs.}
Although these studies further improve the robustness and generality of the enc-dec-centric optimization, their performance still tends to degrade under unseen communication environments. 
Moreover,  covering all possible SF channels using the above approaches would require excessive data sampling and/or model complexity as the actual SF channel can vary by numerous factors, including antenna configurations, channel statistics, and interference/noise levels.


Unlike prior works that focus solely on enc–dec-centric optimization under fixed SF channels, this work highlights that the SF channel is configurable, as its behavior depends on various transceiver operations such as power allocation and adaptive modulation and coding \cite{goldsmith2005wireless}. This implies that the SF channel itself can be incorporated into the training process to further improve task performance. 
An initial attempt in this direction was made in \cite{BlindSC}, which jointly optimizes the enc–dec and a BSC-based SF channel. However, the applicability of \cite{BlindSC} is limited to point-to-point digital SC scenarios. Moreover, \cite{BlindSC} does not establish a theoretical connection between the SF channel and practical communication systems, as it relies on a heuristic regularization loss for optimizing the BSC-based SF channel. \textcolor{black}{Table~\ref{table:related_works} summarizes representative studies on SF channel modeling and optimization and categorizes them into three groups: single SF channels, multiple SF channels, and trainable SF channels.} 

To further shed light on the potential of configuring the SF channel, this paper proposes a universal and theoretically grounded framework for jointly optimizing the enc-dec and the SF channel in both analog and digital SCs. In this framework, we focus on solving a joint optimization problem 
that maximizes the task performance under a limited mutual information between the transmitted and reconstructed SFs. 
To provide analytical evidence for the necessity of this joint optimization, we first analyze a tractable case where the source is Gaussian and the enc-dec is modeled as linear mappings. Building upon the insights from this analysis, we propose an end-to-end training strategy that jointly optimizes the non-linear deep neural network (DNN)-based enc-dec and the SF channel. We also introduce a communication strategy that realizes the trained SF channel in practical communication scenarios by controlling physical layer (PHY) parameters including transmit power and modulation level. Our framework controls the distortion of each SF by optimizing the SF channel, thereby improving the task performance of SC. Furthermore, it reduces the training overhead by decoupling the training process from the actual communication system. 

\begin{table*}[t]
\centering
\caption{\textcolor{black}{Comparison of related works on SF channel modeling and optimization.}}
\label{table:related_works}
\footnotesize
\renewcommand{\arraystretch}{1.1}
\textcolor{black}{
\resizebox{\textwidth}{!}{%
\begin{tabular}{l l c c c c c}
\hline
Category & Work & SF channel & \makecell{Channel adaptiveness \\ or robustness} & \makecell{Train SF channel \\ for analog SCs} & \makecell{Train SF channel \\ for digital SCs} & \makecell{Theoretical foundation \\ for SF channel} \\
\hline
\hline
\multirow{6}{*}{\makecell{Single SF channel}}
& \cite{DeepJSCC,DeepJSCC_Q} & \makecell{AWGN/Rayleigh \\ with a fixed SNR} & \xmark & \xmark & \xmark & \xmark \\
\cdashline{2-7}
& \cite{OFDM_SC}  & \makecell{OFDM with a fixed \\ number of multipaths} & \xmark & \xmark & \xmark & \xmark \\
\cdashline{2-7}
& \cite{10278812}  & \makecell{MIMO with a fixed \\ number of antennas} & \xmark & \xmark & \xmark & \xmark \\
\cdashline{2-7}
& \cite{NECST}  & \makecell{BSC with a fixed \\ bit-flip probability} & \xmark & \xmark & \xmark & \xmark \\
\hline
\hline
\multirow{6}{*}{\makecell{Multiple SF channels}}
& \cite{DeepJSCC_Analog_1}  & \makecell{AWGN/Rayleigh \\ with varying SNRs} & \cmark & \xmark & \xmark & \xmark \\
\cdashline{2-7}
& \cite{Joohyuk}  & \makecell{BSC with varying \\ bit-flip probabilities} & \cmark & \xmark & \xmark & \xmark \\
\cdashline{2-7}
& \cite{JSCC_universal}  & \makecell{AWGN with varying \\ SNRs and modulation orders} & \cmark & \xmark & \xmark & \xmark \\
\cdashline{2-7}
& \cite{10436878,10382545}  & \makecell{Transmission with \\ varying
numbers of SFs} & \cmark & \xmark & \xmark & \xmark \\
\cdashline{2-7}
\hline
\hline
\multirow{3}{*}{\makecell{Trainable SF channel}}
& \cite{BlindSC}  & \makecell{Trainable BSC with \\ a heuristic loss function} & \cmark & \xmark & \cmark & \xmark \\
\cdashline{2-7}
& \textbf{This work} & \makecell{Trainable AWGN / Trainable BSC \\ with a mutual information constraint} & \cmark & \cmark & \cmark & \cmark \\
\hline
\end{tabular}
}
}
\vspace{-3mm}
\end{table*}

The major contributions of this paper are summarized as follows.
\begin{itemize}
    \item We formulate the joint optimization problem of the enc-dec and the SF channel to maximize task performance in both analog and digital SCs. \textcolor{black}{In this problem, we introduce a mutual information constraint between the transmitted and reconstructed SFs to prevent convergence to a trivial error-free SF channel, while accounting for various communication constraints in an integrated manner.}

    \item We present an analytical study on the joint optimization between the enc-dec and the SF channel. To this end, we focus on a tractable scenario with a linear enc–dec structure and a Gaussian source, from which we derive the optimal SF channel in closed form. 

    \item We propose an end-to-end training strategy for jointly optimizing the DNN-based enc-dec and the SF channel under limited mutual information. In analog SCs, the SF channel is modeled as an AWGN channel, where each SF is corrupted by Gaussian noise with a learnable variance. In digital SCs, the SF channel is modeled as a set of BSCs, where each bit can be flipped with a learnable bit-flip probability. In both analog and digital SCs, the limited mutual information is addressed as a rate allocation problem, in which each SF or bit is assigned a portion of the total communication rate. The allocated rate determines the noise variance in analog SCs or the bit-flip probability in digital SCs, thereby enabling individual control over the distortion of each SF or bit. 
    

    \item We introduce a communication strategy, referred to as a PHY calibration strategy, which realizes the optimized SF channel by controlling PHY parameters. In single-user analog SCs, the proposed strategy determines the transmit power and feature-to-channel mapping so that the actual SNR matches the trained SNR. 
    This approach can be readily extended to multi-user analog SCs. In multi-user digital SCs, it jointly adjusts the transmit powers and modulation levels across multiple users to align the actual BERs with the trained bit-flip probabilities. For both analog and digital SCs, the proposed strategy selects the most suitable SF channel among multiple candidates to adapt to varying communication environments. 
    
    \item Through simulation, we demonstrate that the proposed framework achieves superior image reconstruction quality across various mutual information limits. We also numerically verify that the simulation results observed in the Gaussian-source analysis also appear in SC, confirming the practical validity of the theoretical insights.
    Furthermore, we show that the proposed PHY calibration strategy faithfully realizes the target SF channel in actual wireless environments. 
\end{itemize}



\section{System Model and Concept of SF Channel}\label{Sec:System}
In this section, we first present the analog and digital SC systems and then introduce the concept of the SF channel.

\subsection{System Model}\label{Sec:Sys}
We consider a typical SC model where a transmitter is connected to a receiver over a wireless network to perform an image reconstruction task.
This model can be readily extended to other machine learning tasks and SC architectures.
Let ${\bm x} \in \mathbb{R}^N$ denote the image data of length $N$.
The transmitter encodes ${\bm x}$ using an encoder as follows:
\begin{align}
{\bm z}=f_{{\bm \theta}_{\rm enc}}({\bm x})\in \mathbb{R}^M,
\end{align}
where $f_{{\bm \theta}_{\rm enc}}(\cdot)$ is the encoding function parameterized by ${\bm \theta}_{\rm enc}$, and ${\bm z}$ is the SF vector of length $M$. 

After encoding, the SF vector ${\bm z}$ is mapped into either an analog or a digital symbol depending on whether analog or digital communication is employed. 
\begin{itemize}
    \item {\bf Analog symbol mapping:} Each SF is mean-centered and scaled as 
    \begin{align}
        \tilde{z}_m^{\rm (A)} = \sqrt{p_m^{\rm (A)}}({z}_m-\mu_m),\quad m\in\{1,\cdots,M\},
    \end{align}
    where $\mu_m$ is the mean of the $m$-th SF, and $p_m^{\rm (A)}$ is a power allocation coefficient satisfying $\sum_{m} \mathbb{E}[|\tilde{z}_m^{\rm (A)}|^2] \leq P_{\rm tot}$, with $P_{\rm tot}$ denoting the total power budget.  
    Then, pairs of real-valued SFs are grouped into complex symbols as
    \begin{align}
        {s}_u^{\rm (A)} = \tilde{z}_{2u-1}^{\rm (A)} + j\,\tilde{z}_{2u}^{\rm (A)},\quad u\in\left\{1,\cdots,\frac{M}{2}\right\}.
    \end{align}

    \item {\bf Digital symbol mapping:} The SF vector is first quantized into a bit sequence ${\bm b} \in \{0,1\}^B$ of length $B$ using standard quantization methods \cite{oh2023communication,Quantization,Linear_quant}. The bit sequence is then mapped to a symbol sequence $\tilde{\bm z}^{\rm (D)} \in \mathcal{C}^T$ of length $T$ through a digital modulation process, where $\mathcal{C}$ denotes the constellation set. Each modulated symbol is scaled as
    \begin{align}
        {s}_t^{\rm (D)} = \sqrt{p_t^{\rm (D)}}\tilde{z}_t^{\rm (D)},\quad t\in\left\{1,\cdots,T\right\},
    \end{align}
    where $p_t^{\rm (D)}$ is the transmit power allocated to the $t$-th symbol $\tilde{z}_t^{\rm (D)}$ and satisfies $\sum_{t} p_t^{\rm (D)} \leq P_{\rm tot}$ under the assumption that $\mathbb{E}[|\tilde{z}_t^{\rm (D)}|^2] = 1$.
\end{itemize}
For consistency, we denote the symbol sequence length by $T$. In analog mapping, $T = M/2$ is fixed, while in digital mapping, $T$ varies depending on the modulation order. The superscripts $(\rm A)$ and $(\rm D)$ are omitted hereafter for notational simplicity.

Under a flat-fading channel, the received signal at the $t$-th channel use is expressed as 
\begin{align}\label{eq:analog_y}
    y_t = h_t{s}_t + n_t,
\end{align}
where $h_t \in \mathbb{C}$ is the channel coefficient, and $n_t\sim\mathcal{CN}(0,\sigma^2)$ is AWGN with variance $\sigma^2$. The channel coefficient $h_t$ may remain constant or vary depending on the coherence time and the number of subcarriers \cite{goldsmith2005wireless}. Upon receiving the signal in \eqref{eq:analog_y}, the receiver performs channel equalization to obtain the equalized signal at the $t$-th channel use, expressed as 
\begin{align}\label{eq:equal_y}
    \tilde{y}_t \triangleq \frac{h_t^*}{|h_t|^2}y_t={s}_t + \tilde{n}_t,
\end{align}
where $\tilde{n}_t \sim \mathcal{CN}(0,\frac{\sigma^2}{|h_t|^2})$. From the equalized signal in \eqref{eq:equal_y}, an estimate of $z_m$ is obtained using either an analog or digital demapping process. 
\begin{itemize} 
    \item \textbf{Analog symbol demapping}: The equalized signal $\tilde{y}_t$ is decomposed into its in-phase and quadrature components, followed by power de-scaling and mean restoration. \textcolor{black}{This standard receiver-side processing leads to the equivalent additive-noise model, given by}
    \begin{align}\label{eq:analog_eff_received}
        \hat{z}_m = z_m + w_m,\quad w_m \sim \mathcal{N}\!\left(0,\tfrac{\sigma^2}{2|h_t|^2p_m}\right),
    \end{align}
    where $t=\lceil m/2 \rceil$. 

    \item \textbf{Digital symbol demapping}: Symbol detection is performed on $\tilde{y}_t$ to recover the estimated bit sequence $\hat{\bm b}\in\{0,1\}^B$. The estimated bit sequence is then dequantized to obtain the estimated SF vector $\hat{\bm z}$. 
\end{itemize}
Finally, the receiver reconstructs an image using a decoder as follows:
\begin{align}
\hat{\bm x}=f_{{\bm \theta}_{\rm dec}}(\hat{\bm z})\in \mathbb{R}^N,
\end{align}
where $\hat{\bm x}$ denotes the reconstructed image, and $f_{{\bm \theta}_{\rm dec}}(\cdot)$ represents the decoding function parameterized by ${\bm \theta}_{\rm dec}$.


%

\subsection{SF Channel}
{\bf Definition (SF channel):} The \textit{SF channel} is the equivalent channel between the encoder output and the decoder input, denoted by $p(\hat{\bm z}|{\bm z})$ for analog communication and by $p(\hat{\bm b}|{\bm b})$ for digital communication\footnote{In this work, we focus on digital SC systems, where the encoder output is quantized and converted into a bit sequence. Nevertheless, the system can be extended to the discrete symbol domain, as in \cite{DeepJSCC_Q}, where the SF channel can be modeled as $p(\hat{\bm s}|{\bm s})$ with $\bm s$ denoting discrete modulation symbols.}. 
\vspace{2mm}

For the system described in Sec.~\ref{Sec:Sys}, the SF channel includes the entire transmit-receive process, including power control and equalization. Following the definition of the SF channel, the overall SC pipeline can be represented as 
\begin{align}
    {\bm x}
    \xrightarrow{\text{Enc}}
    {\bm z}~(\text{or}~{\bm b})
    \xrightarrow{\text{SF channel}}
    \hat{\bm z}~(\text{or}~\hat{\bm b})
    \xrightarrow{\text{Dec}}
    \hat{\bm x},
\end{align}
where its visualization is shown in Fig.~\ref{fig:SFC}. 

\section{Motivation and Case Study}\label{Sec:Comp}
In this section, we present the motivation for joint enc-dec and SF channel optimization and provide a case study that analytically illustrates its necessity. 

\subsection{Motivation for Joint Optimization}\label{Sec:Motivation}
Our key observation is that the SF channel is {\em configurable} through various communication strategies. For instance, \eqref{eq:analog_eff_received} shows that the distortion of each SF $z_m$ can be controlled by adjusting the corresponding power coefficient $p_m$. 
Despite this inherent configurability, most existing works focus solely on optimizing the enc-dec while keeping the SF channel fixed. This motivates us to consider the joint optimization of the enc-dec and the SF channel.

\textcolor{black}{
A trivial SF channel that maximizes task performance corresponds to the ideal {\em error-free} channel, i.e., $\hat{\bm z} = {\bm z}$. While this solution preserves all SFs without distortion, it fails to account for the task-dependent importance of individual SF components and results in an inefficient allocation of communication resources. In particular, over-allocating resources to all SFs disregards their heterogeneous contributions to task performance, leading to potentially excessive resource usage. Therefore, it is essential to impose appropriate constraints on the SF channel, ensuring operation under limited communication resources and within a {\em non-trivial} regime where $\hat{\bm z} \neq {\bm z}$.}

\textcolor{black}{To facilitate optimal SF channel design under resource constraints, we raise the following fundamental question:}


\vspace{2mm}
{\bf Motivating Question:} What is the optimal SF channel that maximizes task performance under a limited mutual information, i.e., $I({\bm z};\hat{\bm z}) \leq I_{\rm max}$ or $I({\bm b};\hat{\bm b}) \leq I_{\rm max}$?
\vspace{2mm}

\textcolor{black}{From an information-theoretic perspective, this constraint implies that the transmitted and received SFs are not identical. Therefore, it indirectly imposes communication constraints and allows us to accommodate various communication scenarios. One representative example is the Gaussian channel, where the mutual information is determined by bandwidth, transmit power, and noise variance. By constraining the mutual information, these factors are restricted in an integrated manner. As a result, the framework captures the effects of multiple communication constraints, while avoiding excessive training overhead for the enc-dec.}
\textcolor{black}{The mutual information limit $I_{\rm max}$ does not represent the Shannon channel capacity; rather, it quantifies how much information about ${\bm z}$ can be conveyed through the channel to $\hat{\bm z}$. A smaller $I_{\rm max}$ indicates more severe degradation under poor channel conditions or low transmit power, whereas a larger $I_{\rm max}$ corresponds to more reliable transmission. Since channel-induced distortions are inherently allowed, our formulation differs from the classical source-channel separation theory.}

\subsection{Case Study: Analog SC with Linear Enc-Dec and Gaussian Input}\label{Sec:Gauss_Linear}

\textcolor{black}{Motivated by the above question, we present a simple analytical case study to gain insight into the joint optimization of the enc-dec and the SF channel. To this end, we consider a simplified and tractable scenario with a linear enc-dec structure and a Gaussian source, for which the joint optimization can be analytically characterized. The proposed SC framework that generalizes this insight to practical settings is presented in the subsequent sections.}

Let ${\bm x}\in\mathbb{R}^N$ be a Gaussian source with distribution ${\bm x}\sim\mathcal{N}({\bm 0},{\bm \Sigma}_{\rm xx})$, where ${\bm \Sigma}_{\rm xx}={\rm diag}(\sigma_{{\rm x},1}^2,\cdots,\sigma_{{\rm x},N}^2)$ and $\sigma_{{\rm x},1}^2 \geq \cdots \geq \sigma_{{\rm x},N}^2$. 
A linear encoder compresses ${\bm x}$ as ${\bm z}={\bm A}{\bm x}\in\mathbb{R}^M$, with $M\leq N$ and ${\bm A}\in\mathbb{R}^{M\times N}$ satisfying ${\bm A}{\bm A}^{\sf T}={\bm I}_M$ to constrain the encoder output power.
The SF channel is assumed to add an independent Gaussian noise ${\bm w}\sim\mathcal{N}({\bm 0},{\bm \Sigma}_{\rm ww})$, where ${\bm \Sigma}_{\rm ww}={\rm diag}(\sigma_{{\rm w},1}^2,\cdots,\sigma_{{\rm w},M}^2)$, resulting in the received signal $\hat{\bm z} = {\bm z} + {\bm w}$. 
The decoder reconstructs $\hat{\bm x}={\bm B}\hat{\bm z}$, where ${\bm B}\in\mathbb{R}^{N\times M}$. The optimization problem is formulated as 
\begin{align}
    ({\bf P1})~~\min_{ {\bm A}, {\bm B}, {\bm \Sigma}_{\rm ww} }& \mathbb{E}[\| {\bm x} - \hat{\bm x} \|^2], \label{eq:P3_obj} \\
    \text{s.t.}~~~&I({\bf z};\hat{\bf z}) \leq I_{\rm max},~{\bm A}{\bm A}^{\sf T} = {\bm I}_M.
\end{align}
In problem ${\bf P1}$, the mutual information is given by 
\begin{align}
    I({\bm z}; \hat{\bm z})  &= h(\hat{\bm z})-h(\hat{\bm z}|{\bm z}) \nonumber\\ &= \frac{1}{2}\log \left( \det( {\bm \Sigma}_{\rm ww}^{-1}({\bm A}{\bm \Sigma}_{\rm xx}{\bm A}^{\sf T}+{\bm \Sigma}_{\rm ww}) ) \right).\label{eq:const_det}
\end{align}
The objective function can be expressed as 
\begin{align}\label{eq:obj_B}
    \mathbb{E}[\|{\bm x}-\hat{\bm x}\|^2] &= \mathbb{E}[\|{\bm x}-{\bm B}\hat{\bm z}\|^2] \nonumber\\ &= {\rm Tr}({\bm \Sigma}_{\rm xx} - 2 {\bm B}{\bm \Sigma}_{\hat{\rm z}{\rm x}} + {\bm B}{\bm \Sigma}_{\hat{\rm z}\hat{\rm z}}{\bm B}^{\sf T}),
\end{align}
where ${\bm \Sigma}_{\hat{\rm z}{\rm x}} \triangleq \mathbb{E}[\hat{\bm z}{\bm x}^{\sf T}] = {\bm A}{\bm \Sigma}_{{\rm x}{\rm x}}^{\sf T}$, and ${\bm \Sigma}_{\hat{\rm z}\hat{\rm z}} \triangleq \mathbb{E}[\hat{\bm z}\hat{\bm z}^{\sf T}] = {\bm A}{\bm \Sigma}_{{\rm x}{\rm x}}{\bm A}^{\sf T} + {\bm \Sigma}_{{\rm w}{\rm w}}$. 
Differentiating \eqref{eq:obj_B} with respect to ${\bm B}$ and setting the result to zero, we have the optimal form of ${\bm B}$ as
\begin{align}\label{eq:opt_B}
    {\bm B} &= {\bm \Sigma}_{{\rm x}\hat{\rm z}} {\bm \Sigma}_{\hat{\rm z}\hat{\rm z}}^{-1}= {\bm \Sigma}_{{\rm x}{\rm x}}{\bm A}^{\sf T}({\bm A}{\bm \Sigma}_{{\rm x}{\rm x}}{\bm A}^{\sf T} + {\bm \Sigma}_{{\rm w}{\rm w}})^{-1}.
\end{align}
Substituting \eqref{eq:opt_B} into \eqref{eq:obj_B} and applying the Woodbury matrix identity, the objective function can be rewritten as 
\begin{align}
    \mathbb{E}[\|{\bm x}-\hat{\bm x}\|^2] = {\rm Tr}(({\bm \Sigma}_{\rm xx}^{-1} + {\bm A}^{\sf T}{\bm \Sigma}_{{\rm w}{\rm w}}^{-1}{\bm A} )^{-1}), 
\end{align}
which demonstrates the dependence of the objective on ${\bm A}$ and ${\bm \Sigma}_{\rm ww}$. However, directly differentiating it with respect to these variables does not yield a closed-form solution due to the complex trace-inverse form. 
To address this, we derive the solution through three steps: (i) we characterize the optimal form of ${\bm A}$, (ii) determine the optimal ${\bm \Sigma}_{\rm ww}$, and (iii) obtain the closed-form expression for the optimal ${\bm A}$. 

To characterize the optimal form of ${\bm A}$, let us refer to a binary matrix $\tilde{\bm P}$ in which every standard basis vector appears once as a column, with the remaining columns (if any) being zero vectors, as a partial permutation matrix. Then the following lemma holds. 
\vspace{1mm}
\begin{lem}\label{lem:A_struc}
    For any matrix ${\bm A}$, there exists a partial permutation matrix $\tilde{\bm P}$, such that
    \begin{align}
        {\rm Tr}(({\bm \Sigma}_{\rm xx}^{-1} + {\bm A}^{\sf T}{\bm \Sigma}_{{\rm w}{\rm w}}^{-1}{\bm A} )^{-1}) \geq 
        {\rm Tr}(({\bm \Sigma}_{\rm xx}^{-1} + \tilde{\bm P}^{\sf T}{\bm \Sigma}_{{\rm w}{\rm w}}^{-1}\tilde{\bm P} )^{-1}). 
    \end{align}
\end{lem}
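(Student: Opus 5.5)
We prove the lemma for $\bm A$ satisfying the constraint ${\bm A}{\bm A}^{\sf T}={\bm I}_M$ of problem ${\bf P1}$, which is the case needed there; for an unconstrained $\bm A$ the spectrum of ${\bm A}^{\sf T}{\bm D}{\bm A}$ would no longer equal that of ${\bm D}$, and the argument below would not apply directly. The plan is to reduce the statement to an inequality about the diagonal of a positive definite matrix, and then to a convex minimization over a permutohedron.

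\textbf{Setup.} Write ${\bm \Lambda}\triangleq{\bm \Sigma}_{\rm xx}^{-1}={\rm diag}(\lambda_1,\dots,\lambda_N)$ and ${\bm D}\triangleq{\bm \Sigma}_{\rm ww}^{-1}={\rm diag}(d_1,\dots,d_M)$. Set ${\bm K}\triangleq{\bm A}^{\sf T}{\bm D}{\bm A}\in\mathbb{R}^{N\times N}$, which is positive semidefinite. Because ${\bm A}{\bm A}^{\sf T}={\bm I}_M$, the nonzero eigenvalues of ${\bm K}$ equal those of ${\bm D}{\bm A}{\bm A}^{\sf T}={\bm D}$. Hence the spectrum of ${\bm K}$ is the vector ${\bm d}\triangleq(d_1,\dots,d_M,0,\dots,0)\in\mathbb{R}^N$, independent of ${\bm A}$. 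Let ${\bm S}={\bm \Lambda}+{\bm K}\succ{\bm 0}$.

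\textbf{Step 1 (diagonal lower bound).} For any positive definite ${\bm S}$ we have $({\bm S}^{-1})_{ii}\ge 1/S_{ii}$. This follows from Cauchy--Schwarz, $1=({\bm e}_i^{\sf T}{\bm e}_i)^2\le({\bm e}_i^{\sf T}{\bm S}{\bm e}_i)({\bm e}_i^{\sf T}{\bm S}^{-1}{\bm e}_i)$. Therefore
\begin{align}
{\rm Tr}(({\bm \Lambda}+{\bm K})^{-1})\ \ge\ \sum_{i=1}^N \frac{1}{\lambda_i+k_i}\triangleq g({\bm k}),\nonumber
\end{align}
where ${\bm k}=(K_{11},\dots,K_{NN})$ is the diagonal of ${\bm K}$.

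\textbf{Step 2 (Schur--Horn and convexity).} By the Schur--Horn theorem, the diagonal ${\bm k}$ of the symmetric matrix ${\bm K}$ is majorized by its eigenvalue vector ${\bm d}$. Equivalently, ${\bm k}$ lies in the permutohedron, the convex hull of all coordinate permutations of ${\bm d}$. The function $g$ is convex on the nonnegative orthant, since each term $1/(\lambda_i+k_i)$ is convex in $k_i$. A convex function attains its minimum over a polytope at a vertex, so there is a permutation $\pi$ with $g({\bm k})\ge g(\pi({\bm d}))$. This step is the crux of the proof: the Schur--Horn majorization is exactly what ties an arbitrary ${\bm A}$ to a coordinate-aligned one. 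One must also check that $g$ is finite, which holds because $\lambda_i>0$.

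\textbf{Step 3 (realizing the vertex).} The vertex $\pi({\bm d})$ places each $d_m$ on a distinct coordinate $i_m$ and zeros elsewhere. Define $\tilde{\bm P}\in\{0,1\}^{M\times N}$ with $m$-th row ${\bm e}_{i_m}^{\sf T}$. Then every standard basis vector ${\bm e}_m\in\mathbb{R}^M$ appears exactly once as a column (column $i_m$), and the remaining columns are zero, so $\tilde{\bm P}$ is a partial permutation matrix in the sense defined above; it also satisfies $\tilde{\bm P}\tilde{\bm P}^{\sf T}={\bm I}_M$. Moreover $\tilde{\bm P}^{\sf T}{\bm D}\tilde{\bm P}={\rm diag}(\pi({\bm d}))$. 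Since ${\bm \Lambda}+\tilde{\bm P}^{\sf T}{\bm D}\tilde{\bm P}$ is diagonal, Step 1 holds with equality for it:
\begin{align}
{\rm Tr}(({\bm \Lambda}+\tilde{\bm P}^{\sf T}{\bm D}\tilde{\bm P})^{-1})=g(\pi({\bm d})).\nonumber
\end{align}
Chaining Steps 1--3 gives the claimed inequality.
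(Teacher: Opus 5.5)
Your Step 2 rests on a false principle. A convex function attains its \emph{maximum} over a polytope at a vertex, not its minimum; a vertex minimum is guaranteed only for concave functions. Schur--Horn says ${\bm k}$ is a ``flattened'' version of ${\bm d}$, and for convex $g$ the minimum over the permutohedron can lie in the interior. For example, when ${\bm \Sigma}_{\rm xx}$ is a multiple of the identity, $g$ is symmetric and convex, hence Schur-convex, so it is minimized at the barycenter. Worse, the inequality you need, $g({\bm k})\ge g(\pi({\bm d}))$ for some $\pi$, is itself false for admissible ${\bm A}$, so no better justification of Step 2 can save the chain. Take $N=M=2$, ${\bm \Sigma}_{\rm xx}={\bm I}_2$, ${\bm \Sigma}_{\rm ww}^{-1}={\rm diag}(3,1)$ and ${\bm A}=\frac{1}{\sqrt{2}}\begin{bmatrix}1&1\\-1&1\end{bmatrix}$, which satisfies ${\bm A}{\bm A}^{\sf T}={\bm I}_2$. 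Then ${\bm K}=\begin{bmatrix}2&1\\1&2\end{bmatrix}$, ${\bm k}=(2,2)$ and $g({\bm k})=2/3$, whereas $g(\pi({\bm d}))=1/4+1/2=3/4$ for both permutations. The true value ${\rm Tr}(({\bm I}+{\bm K})^{-1})=3/4$ is consistent with the lemma, so the loss happens in your Step 1. The diagonal bound ${\rm Tr}({\bm S}^{-1})\ge\sum_i 1/S_{ii}$ is true but too weak: rotating ${\bm K}$ to flatten its diagonal can push $g$ strictly below the partial-permutation value.

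The repair is to bound through the spectrum of ${\bm \Lambda}+{\bm K}$ rather than its diagonal, which is exactly the paper's argument. Write ${\rm Tr}(({\bm \Lambda}+{\bm K})^{-1})=\sum_n 1/\lambda_n({\bm \Lambda}+{\bm K})$. The Lidskii--Wielandt theorem gives $\lambda({\bm \Lambda}+{\bm K})\succ\lambda^{\downarrow}({\bm \Lambda})+\lambda^{\uparrow}({\bm K})$, with eigenvalues in decreasing and increasing order respectively. Schur-convexity of $\sum_n 1/a_n$ then yields ${\rm Tr}(({\bm \Lambda}+{\bm K})^{-1})\ge\sum_n 1/\big(\lambda_n^{\downarrow}({\bm \Lambda})+\lambda_n^{\uparrow}({\bm d})\big)$. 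The right-hand side is $g$ evaluated at the anti-sorted vertex of the permutohedron. Your Step 3 realizes that vertex with a partial permutation matrix, so the rest of your proposal goes through. Your restriction to ${\bm A}{\bm A}^{\sf T}={\bm I}_M$ is correct and necessary. The paper relies on it implicitly when asserting that the spectrum of ${\bm V}$ is fixed by ${\bm \Sigma}_{\rm ww}$. Without it, ${\bm A}=c[{\bm I}_M,{\bm 0}]$ with $c\to\infty$ violates the lemma.
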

\begin{IEEEproof}
    See Appendix~\ref{Apdx:Lem2}.
\end{IEEEproof}
\vspace{1mm}
Based on Lemma~\ref{lem:A_struc}, the optimal ${\bm A}$ has the form of a partial permutation matrix $\tilde{\bm P}$. Then, problem ${\bf P1}$ is reformulated as
\begin{align}
    ({\bf P2})~~&\min_{\{\tilde{p}_{m,n}\}_{\forall m,n}, \{\sigma_{{\rm w},m}\}_{\forall m}} \quad  \sum_{n=1}^N \frac{1}{\frac{1}{\sigma_{{\rm x},n}^2} + \sum_{m=1}^M \frac{\tilde{p}_{m,n}}{\sigma_{{\rm w},m}^2}  }, \label{eq:obj_P4} \\
    \text{s.t.} \quad & \frac{1}{2} \sum_{m=1}^M \log \left( 1+\frac{ [\tilde{\bm P}{\bm \Sigma}_{\rm xx} \tilde{\bm P}^{\sf T}]_{m,m} }{\sigma_{{\rm w},m}^2} \right) \leq I_{\rm max},\label{eq:const_P4}\\
     \tilde{p}_{m,n}&\in\{0,1\},~\sum_{n=1}^N \tilde{p}_{m,n} = 1,~\sum_{m=1}^M \tilde{p}_{m,n} \in \{0,1\}, \label{eq:const_tilde_p}
\end{align}
where the constraints in \eqref{eq:const_tilde_p} come from the definition of $\tilde{\bm P}$. 

Setting ${\bm A}$ as a partial permutation matrix implies that only a subset of sources is selected for transmission. Let $\mathcal{T} \subset \{1,\cdots,N\}$ denote the selected source index set with $|\mathcal{T}|=M$, and let $\phi:\mathcal{T} \rightarrow \{1,\cdots,M\}$ denote the source-channel index mapping function such that $p_{\phi(k),k}=1$ for $k\in\mathcal{T}$. Then, the objective function and mutual information constraint in ${\bf P2}$ can be rewritten as
\begin{align}\label{eq:obj_P4_2}
    \sum_{k\in\mathcal{T}} \frac{\sigma_{{\rm x},k}^2 \sigma_{{\rm w},\phi(k)}^2}{\sigma_{{\rm x},k}^2 + \sigma_{{\rm w},\phi(k)}^2  } + \sum_{t\in\mathcal{(N \setminus T)}} \sigma_{{\rm x},t}^2,
\end{align}
and 
\begin{align}\label{eq:const_P4_2}
 \frac{1}{2} \sum_{k\in\mathcal{T}} \log \left( 1+\frac{ \sigma_{{\rm x},k}^2 }{\sigma_{{\rm w},\phi(k)}^2} \right)\leq I_{\rm max},
\end{align}
respectively. 

Applying the Lagrangian method to \eqref{eq:obj_P4_2} and \eqref{eq:const_P4_2}, the optimal noise variance is obtained as
\begin{align}\label{eq:opt_n_var}
    \big(\sigma_{{\rm w},\phi(k)}^2\big)^\star =
    \begin{cases}
    \displaystyle \frac{\lambda^\star \sigma_{{\rm x},k}^2}{2\sigma_{{\rm x},k}^2 - \lambda^\star} & \text{if } \lambda^\star < 2\sigma_{{\rm x},k}^2, \\
    \infty & \text{if } \lambda^\star \geq 2\sigma_{{\rm x},k}^2,
    \end{cases}
\end{align}
where $\lambda^\star$ is the optimal Lagrangian multiplier satisfying $\frac{1}{2}\sum_{k\in\mathcal{T}} \log \bigg( 1+\frac{ \sigma_{{\rm x},k}^2 }{\big(\sigma_{{\rm w},\phi(k)}^2\big)^\star} \bigg)$.

Substituting \eqref{eq:opt_n_var} into \eqref{eq:obj_P4_2}, the objective function is represented as
\begin{align}\label{eq:obj_P4_3}
\sum_{t=1}^N \sigma_{{\rm x},t}^2 - \sum_{k\in\mathcal{A}} \sigma_{{\rm x},k}^2 + \frac{\lambda}{2}|\mathcal{A}|,
\end{align}
where $\mathcal{A} = \{k|k\in\mathcal{T},\sigma_{{\rm w},\phi(k)}^2<\infty\}$ is the active source index set. It should be noted that the problem of determining $\mathcal{T}$ and $\phi$ reduces to finding the optimal active set, which is characterized as follows:
\vspace{2mm}
\begin{lem}\label{lem:Opt_A_set}
    The optimal active set $\mathcal{A}^\star$ is $\{1,2,\cdots,|\mathcal{A}|\}$, where $|\mathcal{A}|$ is determined by $\lambda^\star$. 
\end{lem}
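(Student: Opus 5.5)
The plan is to show that, for any fixed multiplier $\lambda$, the objective in \eqref{eq:obj_P4_3} is minimized by activating the sources with the largest variances, and that activation is decided by the threshold rule \eqref{eq:opt_n_var}.

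First, fix the selected set $\mathcal{T}$ and mapping $\phi$. By \eqref{eq:opt_n_var}, a source $k\in\mathcal{T}$ is active exactly when $2\sigma_{{\rm x},k}^2>\lambda^\star$. Since $\sigma_{{\rm x},1}^2\ge\cdots\ge\sigma_{{\rm x},N}^2$, this condition holds for an initial segment of indices in the ordered list of variances. So for a given $\mathcal{T}$, the active set is $\{k\in\mathcal{T}:\sigma_{{\rm x},k}^2>\lambda^\star/2\}$, and $\phi$ plays no role: the per-source terms in \eqref{eq:obj_P4_2} and \eqref{eq:const_P4_2} depend only on the pair $(\sigma_{{\rm x},k}^2,\sigma_{{\rm w},\phi(k)}^2)$, and the noise variances are free variables that can be relabeled.

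Next comes an exchange argument over $\mathcal{T}$. Suppose an optimal configuration has an active index $k$ together with some index $j<k$ that is not active, meaning either $j\notin\mathcal{T}$ or $j$ is assigned infinite noise. Replace $k$ by $j$ and give $j$ a noise variance $\tilde\sigma^2$ chosen so that $\log(1+\sigma_{{\rm x},j}^2/\tilde\sigma^2)=\log(1+\sigma_{{\rm x},k}^2/\sigma_{{\rm w},\phi(k)}^2)$. This keeps the mutual-information budget in \eqref{eq:const_P4_2} unchanged. With the per-source rate $r$ held fixed, the distortion of a source of variance $\sigma^2$ is $\sigma^2 e^{-2r}$. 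The total change in the objective is therefore
\begin{align}
(\sigma_{{\rm x},j}^2 e^{-2r}+\sigma_{{\rm x},k}^2)-(\sigma_{{\rm x},k}^2e^{-2r}+\sigma_{{\rm x},j}^2)=-(\sigma_{{\rm x},j}^2-\sigma_{{\rm x},k}^2)(1-e^{-2r})\le 0. \nonumber
\end{align}
The swap thus never increases the distortion. Repeating it shows that some optimum has active set $\{1,\dots,|\mathcal{A}|\}$.

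Finally, $|\mathcal{A}|$ is read off from \eqref{eq:opt_n_var}. It is the number of indices with $\sigma_{{\rm x},k}^2>\lambda^\star/2$, where $\lambda^\star$ is fixed by making the rate constraint \eqref{eq:const_P4_2} tight. This is the reverse water-filling structure. I should also check that $|\mathcal{A}|\le M$: if $\lambda^\star$ would activate more than $M$ sources, the cardinality constraint $|\mathcal{T}|=M$ caps the active set at the first $M$ indices, and $\lambda^\star$ is recomputed on that set.

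The main obstacle is making the exchange step rigorous. It must respect the partial-permutation constraint \eqref{eq:const_tilde_p}, which holds because swapping which source occupies channel $\phi(k)$ keeps $\tilde{\bm P}$ a partial permutation. It must also handle the fact that $\lambda^\star$ depends on $\mathcal{T}$; this is avoided by comparing configurations at equal total rate rather than at equal $\lambda$.
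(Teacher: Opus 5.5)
Your proof is correct. It is an exchange argument like the paper's, but the key comparison is done differently.

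The paper compares the \emph{re-optimized} values of $\mathcal{A}$ and the swapped set $\mathcal{B}=(\mathcal{A}\setminus\{q\})\cup\{p\}$. It writes $\lambda_{\mathcal{A}}=2e^{-2C/|\mathcal{A}|}(\prod_{k\in\mathcal{A}}\sigma_{{\rm x},k}^2)^{1/|\mathcal{A}|}$ and derives $\lambda_{\mathcal{B}}/\lambda_{\mathcal{A}}=r^{1/|\mathcal{A}|}$. It then needs Bernoulli's inequality together with $\lambda_{\mathcal{A}}<2\sigma_{{\rm x},q}^2$ to conclude $D(\mathcal{B})<D(\mathcal{A})$.

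You instead hand the rate $r$ of $k$ unchanged to $j$, which gives an explicit feasible point for the swapped configuration. The identity ``distortion at rate $r$ equals $\sigma^2e^{-2r}$'' then reduces the comparison to one line, with no multiplier involved.

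This buys some robustness. The paper's expression $D(\mathcal{B})=\cdots+\frac{\lambda_{\mathcal{B}}}{2}|\mathcal{A}|$ tacitly assumes every element of $\mathcal{B}$ stays above the raised water level $\lambda_{\mathcal{B}}/2>\lambda_{\mathcal{A}}/2$. That can fail when $\sigma_{{\rm x},p}^2$ is much larger than the other active variances; the formula then describes a relaxed allocation with a negative rate, not the true value of $\mathcal{B}$. A feasible point upper-bounds the swapped optimum with no such condition. The paper's route, in turn, keeps the argument in terms of the multiplier that the lemma statement uses to fix $|\mathcal{A}|$.

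One sharpening is worth adding. An active $k$ has finite noise, so $r>0$, and your change is strictly negative whenever $\sigma_{{\rm x},j}^2>\sigma_{{\rm x},k}^2$. This shows that every optimum, not just some optimum, has a prefix active set up to ties in the variances, which is the form in which the lemma is stated.
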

\begin{IEEEproof}
    See Appendix~\ref{Apdx:Lem3}
\end{IEEEproof}
\vspace{2mm}
From Lemma~\ref{lem:Opt_A_set}, the following corollary holds:
\vspace{2mm}
\begin{cor}
    Setting $\mathcal{T}=\{1,2,\cdots,M\}$ is sufficient to determine the optimal active set $\mathcal{A}^\star$.
\end{cor}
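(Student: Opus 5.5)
The plan is to derive the corollary directly from Lemma~\ref{lem:Opt_A_set}, together with the fact that $\mathcal{A}\subseteq\mathcal{T}$ by definition.

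First, I will bound the size of the optimal active set. Since $\mathcal{A}=\{k\,|\,k\in\mathcal{T},\sigma_{{\rm w},\phi(k)}^2<\infty\}$, every active index is a selected index, so $|\mathcal{A}|\le|\mathcal{T}|=M$. Lemma~\ref{lem:Opt_A_set} states that the optimal active set has the form $\mathcal{A}^\star=\{1,2,\cdots,|\mathcal{A}^\star|\}$. It follows that $\mathcal{A}^\star\subseteq\{1,\cdots,M\}$, whatever the value of $\lambda^\star$.

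Second, I will show that the choice $\mathcal{T}=\{1,\cdots,M\}$ loses nothing. For any feasible pair $(\mathcal{T},\phi)$, the objective \eqref{eq:obj_P4_3} and the constraint \eqref{eq:const_P4_2} depend on $\mathcal{T}$ only through the active set. Indices in $\mathcal{T}\setminus\mathcal{A}$ receive infinite noise variance, so they add zero to the mutual information and contribute their full variance $\sigma_{{\rm x},t}^2$ to the distortion, exactly as unselected indices do.

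With $\mathcal{T}=\{1,\cdots,M\}$ and any bijection $\phi$ (for example, the identity), the set $\mathcal{A}^\star$ can be realized as follows:
\begin{itemize}
\item assign the noise variances in \eqref{eq:opt_n_var} to the indices in $\mathcal{A}^\star$;
\item assign infinite noise variance to the remaining indices of $\{1,\cdots,M\}$.
\end{itemize}
This yields exactly the optimal objective value and satisfies the constraint. Moreover, \eqref{eq:opt_n_var} depends only on $\sigma_{{\rm x},k}^2$ and $\lambda^\star$, and not on the label $\phi(k)$, so the mapping $\phi$ is immaterial.

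The only subtle step is confirming that restricting $\mathcal{T}$ does not change $\lambda^\star$ or the realized active set. I would handle this by noting two facts:
\begin{itemize}
\item the threshold rule $\lambda^\star<2\sigma_{{\rm x},k}^2$ selects the largest variances;
\item the variances are ordered, $\sigma_{{\rm x},1}^2\ge\cdots\ge\sigma_{{\rm x},N}^2$.
\end{itemize}
Together these mean that the active indices for a given $\lambda$ form a prefix, and this prefix lies inside $\{1,\cdots,M\}$ by the bound from the first step. Hence the same $\lambda^\star$ satisfies the active constraint under the restricted choice of $\mathcal{T}$.
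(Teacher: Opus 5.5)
Your proposal is correct and follows the same route as the paper. Since $\mathcal{A}^\star\subseteq\mathcal{T}$, $|\mathcal{T}|=M$, and $\mathcal{A}^\star$ is a prefix by Lemma~\ref{lem:Opt_A_set}, $\mathcal{T}$ can be taken as $\mathcal{A}^\star$ padded with arbitrary extra indices, and choosing the padding $\{|\mathcal{A}^\star|+1,\dots,M\}$ gives $\{1,\dots,M\}$; your explicit check that padded (inactive) indices add nothing to \eqref{eq:const_P4_2} and behave like unselected indices in \eqref{eq:obj_P4_3} just spells out what the paper calls obvious.
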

\begin{IEEEproof}
    The set $\mathcal{T}$ must contain $\mathcal{A}^\star$, and satisfy $|\mathcal{T}|=M$. Therefore, it is obvious that $\mathcal{T}$ must be $\mathcal{A}^\star\cup\mathcal{U}$ where $\mathcal{U}$ is an arbitrary subset of $\{|\mathcal{A}|+1\cdots,N\}$ with $|\mathcal{U}|=M-|\mathcal{A}^\star|$. 
\end{IEEEproof}
\vspace{2mm}
Regarding the mapping function, since $\phi$ does not affect the objective function in \eqref{eq:obj_P4_3} and $\mathcal{A}$, the identity mapping $\phi(k)=k$ can be adopted as a sufficient choice.

The sequence of results established in \eqref{eq:opt_B}, \eqref{eq:opt_n_var}, Lemmas~\ref{lem:A_struc} and \ref{lem:Opt_A_set}, and Corollary~1 leads to the following theorem. 
\vspace{2mm}
\begin{thm}[Optimal Solution]\label{thm:OptSol}
The optimal encoder, decoder, and noise covariance matrix of the SF channel in problem~${\bf P1}$ are given by 
\begin{align}
    {\bm A}^\star &= \big[{\bm I}_M,~{\bm 0}_{M\times (N-M)}\big], \\
    {\bm B}^\star &= {\bm \Sigma}_{\rm xx}{\bm A}^{{\star}{\sf T}}
    \big({\bm A}^\star {\bm \Sigma}_{\rm xx}{\bm A}^{{\star}{\sf T}} + {\bm \Sigma}_{\rm ww}^\star \big)^{-1}, \\
    {\bm \Sigma}_{\rm ww}^\star &= {\rm diag}\left(\big(\sigma_{{\rm w},1}^2\big)^\star,\cdots,\big(\sigma_{{\rm w},M}^2\big)^\star\right),
\end{align}
where $\big({\sigma}_{{\rm w},k}^2\big)^\star,~k\in\{1,\cdots,M\}$ is obtained from \eqref{eq:opt_n_var} by setting $\phi(k)=k$. 
\end{thm}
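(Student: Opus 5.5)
The proof chains together the reductions already established, checking at each step that no optimality is lost. Start from \textbf{P1} and fix an arbitrary feasible pair $({\bm A},{\bm \Sigma}_{\rm ww})$. For this pair the objective \eqref{eq:obj_B} is a convex quadratic in ${\bm B}$, so the minimizer is the LMMSE decoder \eqref{eq:opt_B}. The minimized value is ${\rm Tr}(({\bm \Sigma}_{\rm xx}^{-1}+{\bm A}^{\sf T}{\bm \Sigma}_{\rm ww}^{-1}{\bm A})^{-1})$. This already gives the form of ${\bm B}^\star$ once ${\bm A}^\star$ and ${\bm \Sigma}_{\rm ww}^\star$ are known, so the remaining task is to minimize this trace over $({\bm A},{\bm \Sigma}_{\rm ww})$ subject to the constraint \eqref{eq:const_det}.

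Next, invoke Lemma~\ref{lem:A_struc} to replace ${\bm A}$ by a partial permutation matrix $\tilde{\bm P}$. The lemma guarantees that the objective does not increase, but it says nothing about feasibility, so feasibility must be checked. Hadamard's inequality gives $\det({\bm I}+{\bm \Sigma}_{\rm ww}^{-1}{\bm A}{\bm \Sigma}_{\rm xx}{\bm A}^{\sf T})\le\prod_m(1+[{\bm A}{\bm \Sigma}_{\rm xx}{\bm A}^{\sf T}]_{mm}/\sigma_{{\rm w},m}^2)$, with equality when the matrix is diagonal, as it is for $\tilde{\bm P}$. This inequality alone is not enough to transfer feasibility from ${\bm A}$ to $\tilde{\bm P}$. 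My plan is to use the construction inside the lemma's proof, in which the variances $[\tilde{\bm P}{\bm \Sigma}_{\rm xx}\tilde{\bm P}^{\sf T}]_{mm}$ are matched to the eigen-structure of ${\bm A}$. If that construction does not control the mutual information directly, the fallback is to rescale ${\bm \Sigma}_{\rm ww}$ per component so that \eqref{eq:const_P4} holds with the original budget, and then argue that the rescaled pair is still no worse. I expect this compatibility between the constraint and Lemma~\ref{lem:A_struc} to be the main obstacle; in the paper's flow it is implicitly accepted through the reformulation \textbf{P2}.

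Within \textbf{P2}, the problem separates into the sum \eqref{eq:obj_P4_2} subject to the sum-log constraint \eqref{eq:const_P4_2}. Each term $\frac{\sigma_{{\rm x},k}^2 s}{\sigma_{{\rm x},k}^2+s}$ is increasing in $s=\sigma_{{\rm w},\phi(k)}^2$. Changing variables to the rates $r_k=\frac12\log(1+\sigma_{{\rm x},k}^2/s)$ turns each term into $\sigma_{{\rm x},k}^2e^{-2r_k}$, which is convex in $r_k$, and the constraint becomes linear, $\sum_k r_k\le I_{\rm max}$. The resulting problem is convex, so the KKT conditions are sufficient. They yield the reverse water-filling rule \eqref{eq:opt_n_var}, with $\lambda^\star$ chosen so that the constraint is tight. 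Tightness holds because the objective is strictly decreasing in each $r_k$.

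Finally, apply Lemma~\ref{lem:Opt_A_set}: the active set consists of the largest-variance indices. Corollary~1 then shows that taking $\mathcal{T}=\{1,\dots,M\}$ loses nothing, which gives $\tilde{\bm P}=[{\bm I}_M,{\bm 0}]$ up to the choice of $\phi$. Since $\phi$ does not affect \eqref{eq:obj_P4_3}, we may take the identity mapping, which yields ${\bm A}^\star$. This matrix satisfies ${\bm A}^\star{\bm A}^{\star\sf T}={\bm I}_M$. Substituting ${\bm A}^\star$ and ${\bm \Sigma}_{\rm ww}^\star$ into \eqref{eq:opt_B} gives ${\bm B}^\star$, which completes the theorem.
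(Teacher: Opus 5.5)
\textbf{Verdict: genuine gap.} Your chain is the paper's: LMMSE decoder, Lemma~\ref{lem:A_struc}, reduction to ${\bf P2}$, Lagrangian/KKT, Lemma~\ref{lem:Opt_A_set}, Corollary~1, identity $\phi$. Your rate substitution $r_k$, which makes ${\bf P2}$ convex for fixed $(\mathcal{T},\phi)$ and justifies both KKT sufficiency and tightness, sharpens the paper's bare ``Lagrangian method.'' But the step you flag yourself is not closed, and the first branch of your plan fails. The construction behind Lemma~\ref{lem:A_struc} keeps ${\bm \Sigma}_{\rm ww}$ fixed and puts the smallest noise variances on the largest source variances. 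By Cauchy interlacing and Fiedler's determinant inequality, that pairing maximizes $\log\det({\bm I}+{\bm \Sigma}_{\rm ww}^{-1}{\bm A}{\bm \Sigma}_{\rm xx}{\bm A}^{\sf T})$ over admissible ${\bm A}$, so the mutual information goes up, not down. Concretely, take $N=2$, $M=1$, ${\bm \Sigma}_{\rm xx}={\rm diag}(4,1)$ and noise variance $s$. The point ${\bm A}=[0,1]$ has $I=\frac{1}{2}\log(1+1/s)$, while the lemma's replacement $\tilde{\bm P}=[1,0]$ with the same $s$ has $\frac{1}{2}\log(1+4/s)$; if the first is tight, the second is infeasible. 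Rescaling ${\bm \Sigma}_{\rm ww}$ is the right fallback instinct. However, Lemma~\ref{lem:A_struc} compares at \emph{fixed} ${\bm \Sigma}_{\rm ww}$ and gives no control once the noise is enlarged, so ``the rescaled pair is still no worse'' needs a new argument. Deferring to the paper does not help, since its passage from ${\bf P1}$ to ${\bf P2}$ takes this step for granted.

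The missing idea is to compare at \emph{equal} mutual information through the error covariance, bypassing Lemma~\ref{lem:A_struc}.
\begin{itemize}
\item Fix a feasible $({\bm A},{\bm \Sigma}_{\rm ww})$ and let $\gamma_1\geq\cdots\geq\gamma_M>0$ be the eigenvalues of ${\bm \Sigma}_{\rm ww}^{-1/2}{\bm A}{\bm \Sigma}_{\rm xx}{\bm A}^{\sf T}{\bm \Sigma}_{\rm ww}^{-1/2}$. Then $I({\bm z};\hat{\bm z})=\frac{1}{2}\sum_m\log(1+\gamma_m)$.
\item The LMMSE error equals ${\rm Tr}({\bm \Sigma}_{\rm xx})-{\rm Tr}({\bm \Sigma}_{\rm xx}{\bm Q})$, where ${\bm Q}={\bm \Sigma}_{\rm xx}^{1/2}{\bm A}^{\sf T}({\bm A}{\bm \Sigma}_{\rm xx}{\bm A}^{\sf T}+{\bm \Sigma}_{\rm ww})^{-1}{\bm A}{\bm \Sigma}_{\rm xx}^{1/2}$. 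Its nonzero eigenvalues are $\gamma_m/(1+\gamma_m)$.
\item Von Neumann's trace inequality gives ${\rm Tr}({\bm \Sigma}_{\rm xx}{\bm Q})\leq\sum_{m=1}^M\sigma_{{\rm x},m}^2\gamma_m/(1+\gamma_m)$.
\item This bound is attained by ${\bm A}=[{\bm I}_M,~{\bm 0}]$ with $\sigma_{{\rm w},m}^2=\sigma_{{\rm x},m}^2/\gamma_m$, whose mutual information is again $\frac{1}{2}\sum_m\log(1+\gamma_m)$.
\end{itemize}
Hence every feasible point is dominated by a feasible point with ${\bm A}=[{\bm I}_M,~{\bm 0}]$ and identity mapping. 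Your convex rate problem then yields \eqref{eq:opt_n_var}, and \eqref{eq:opt_B} yields ${\bm B}^\star$. With this argument, Lemma~\ref{lem:A_struc}, Lemma~\ref{lem:Opt_A_set} and Corollary~1 are no longer needed.
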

\vspace{2mm}
Theorem~1 shows that sources with larger variances are selected for transmission, and their noise variances are inversely proportional to the source variances.  

\begin{figure}[t]
    \centering 
    \subfigure[MSE vs. $I_{\rm max}$]{
    {\epsfig{file=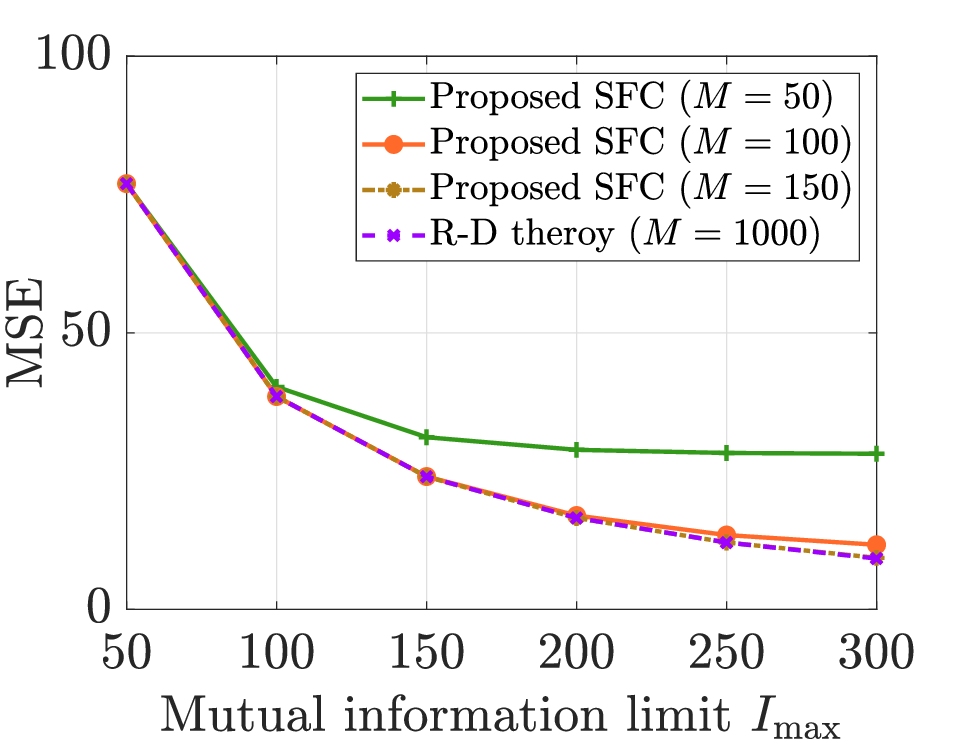, width=4.5cm}}
    }\hspace{-0.7cm}
    \subfigure[MSE vs. $M$]{
    {\epsfig{file=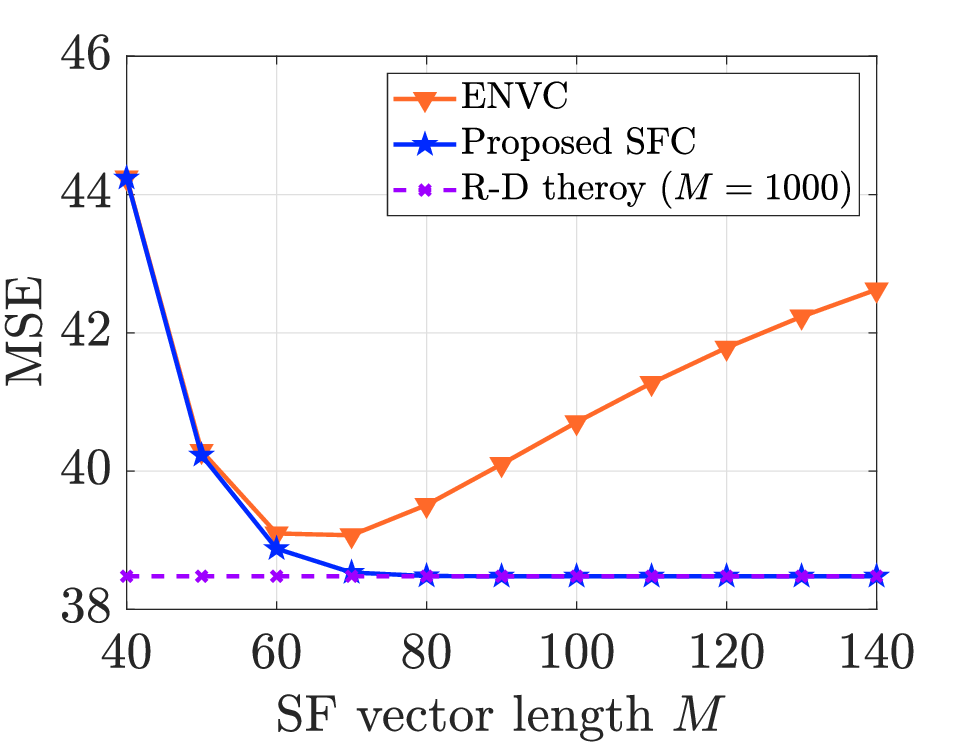, width=4.5cm}}
    }\\
    {\vspace{-2mm}\caption{MSE curves over the mutual information limit $I_{\rm max}$ and the SF vector length $M$.}\label{fig:Gaussian_C_M}}\vspace{-3mm}
\end{figure}

To verify the effectiveness of the SF channel in Theorem~\ref{thm:OptSol}, we conduct simulations with $N=1000$, where $\sigma_{{\rm x},n}^2 \sim {\rm Lognormal}(0,4)$ and are sorted in descending order. We compare three schemes: (i) Proposed SFC (Theorem~\ref{thm:OptSol}), (ii) ENVC (an equal-noise-variance channel across all SFs with the optimal enc-dec), and (iii) R-D theory (Gaussian R-D bound with ${\bm A}={\bm B}={\bm I}_N$ \cite{cover1999elements}). Fig.~\ref{fig:Gaussian_C_M}(a) shows that the proposed SFC closely follows the R-D bound, with a negligible gap for moderate $M$.  Fig.~\ref{fig:Gaussian_C_M}(b) illustrates the MSE versus $M$ when $I_{\rm max}=100$, showing that the proposed SFC rapidly converges to the R-D bound, while EC degrades for large $M$. The major reason for this degradation is that, as $M$ increases, stronger noise is assigned to all SFs, thereby causing greater distortion to high-variance sources. 

\vspace{2mm}
\textcolor{black}{{\bf Remark 1 (Connection to R-D Theory):} The R-D bound can be characterized via a test channel \cite{cover1999elements}. For a Gaussian source, the optimal test channel is an additive Gaussian channel with an appropriately chosen noise variance. However, this channel is derived under the assumption that the source dimension $N$ and the channel input dimension $M$ are identical, i.e., $M=N$. Moreover, the case $M<N$ cannot be directly inferred from the test-channel result. In contrast, our formulation starts from the more general setting with $M\leq N$. As a result, the classical test-channel result is recovered when $M=N$, while the case $M<N$ extends beyond it. Therefore, our framework can be interpreted as a generalization of the test channel, and the closeness to the R-D bound observed in Fig.~\ref{fig:Gaussian_C_M} naturally occurs as $M$ approaches $N$. }

\section{Proposed Joint Enc-Dec and SF Channel Optimization for SCs}\label{Sec:Train}
\textcolor{black}{Our analysis in Sec.~\ref{Sec:Comp} provides analytical evidence that jointly optimizing the enc-dec and the SF channel can improve task performance.}
However, a closed-form solution is obtainable only under a simplified setting (i.e., analog SC with a linear enc–dec and a Gaussian input). In general SC scenarios, it is difficult to obtain an analytically optimal SF channel due to unknown input distributions and nonlinear DNN-based enc–dec structures. To overcome this limitation, we propose an end-to-end training strategy that leverages a data-driven approach to jointly optimize both the enc–dec and the SF channel.
The high-level procedure of our strategy is illustrated in Fig.~\ref{fig:Train_SF}. 



\begin{figure}[t]
    \centering 
    \subfigure[End-to-end training for analog SC]{
    {\epsfig{file=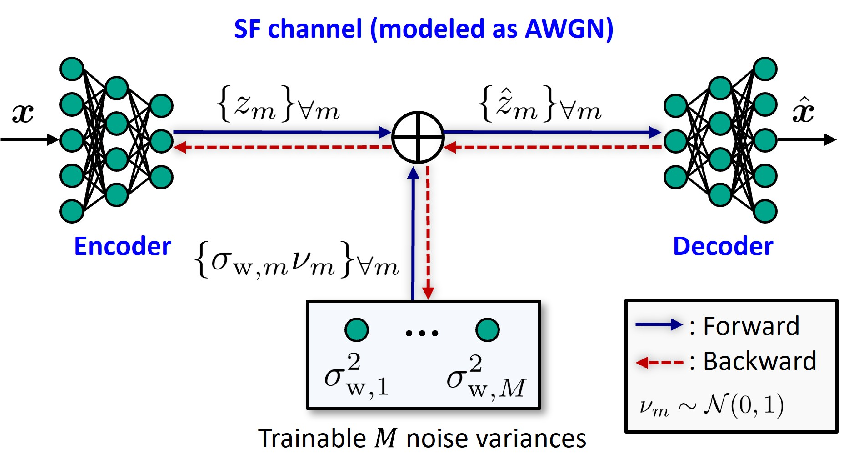, width=6.63cm}}
    }\vspace{-0.5mm}
    \subfigure[End-to-end training for digital SC]{
    {\epsfig{file=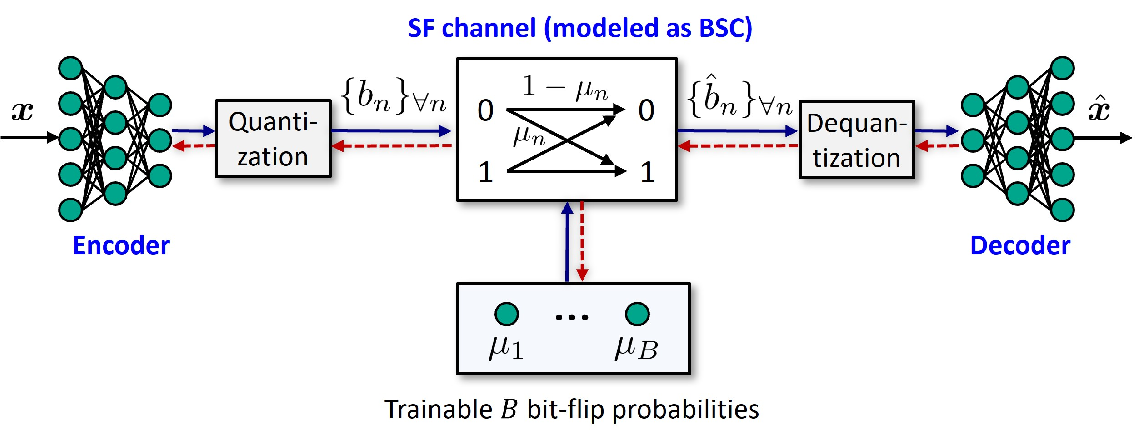, width=9cm}}
    }\\
    {\caption{The proposed end-to-end training strategy jointly optimizing the enc-dec and the SF channel for analog and digital SCs.}\label{fig:Train_SF}}\vspace{-3mm}
\end{figure}

\subsection{End-to-End Training for Analog SC}\label{Sec:Proposed_analog}

\textcolor{black}{The SF channel during training is modeled as an AWGN channel, where $\hat{\bm z}={\bm z}+{\bm w}$ and ${\bm w}\sim \mathcal{N}({\bm 0},{\bm \Sigma}_{\rm ww})$. The noise covariance ${\bm \Sigma}_{\rm ww}$ is treated as a trainable parameter so that different SFs can experience different noise levels during optimization. This implies that SFs that are more critical to the task are assigned lower noise variances for higher reliability, while less important SFs are assigned higher variances to improve communication efficiency. Further, it is important to note that the AWGN modeling is employed solely during training and does not restrict the actual communication scenarios. The practical communication strategy, including power allocation, fading channels, and detection, is described in Sec.~\ref{Sec:PHY_Cali}.} 

\textcolor{black}{Following the above strategy, the optimization problem for end-to-end training is formulated as}
\begin{align}
    ({\bf P3})~~&\min_{{\bm \theta}_{\rm enc},{\bm \theta}_{\rm dec},{\bm \Sigma}_{\rm ww}}  \mathbb{E}[\|{\bm x}-\hat{\bm x}\|^2], \label{eq:P5_obj}\\
    &~~~~~~~\text{s.t.}~~~~~~I({\bm z};\hat{\bm z})\leq I_{\rm max}. 
\end{align}
\textcolor{black}{One key challenge in solving ${\bf P3}$ is that the mutual information is difficult to compute due to the nonlinear nature of DNN-based enc-dec and unknown input distributions. Moreover, directly computing the mutual information would incur high computational complexity, making the optimization intractable.} 
To address this, we adopt the mean-field assumption in \cite{RIB}, under which $I({\bm z};\hat{\bm z})$ is decomposed as follows: 
\begin{align}
    I({\bm z};\hat{\bm z}) = \sum_{m=1}^M I(z_m;\hat{z}_m). 
\end{align}
Based on this decomposition, an upper bound on the mutual information is given by 
\begin{align}
    I({\bm z};\hat{\bm z}) \leq \frac{1}{2} \sum_{m=1}^M \log \left( 1+\frac{\sigma_{{\rm z},m}^2}{\sigma_{{\rm w},m}^2} \right),
\end{align}
where $\sigma_{{\rm z},m}^2$ is the variance of $z_m$, which can be empirically estimated from training samples. From the above expression, we define the communication rate of the $m$-th SF as 
\begin{align}\label{eq:C_def_analog}
    C_m = \frac{1}{2} \log \left( 1+\frac{\sigma_{{\rm z},m}^2}{\sigma_{{\rm w},m}^2} \right),\quad \text{s.t.}~\sum_{m=1}^M C_m = I_{\rm max}.
\end{align}
To find the optimal $C_m$ via training, we parameterize it as 
\begin{align}\label{eq:rho}
    C_m = \rho_m I_{\rm max},\quad \text{s.t.}~\sum_{m=1}^M \rho_m = 1,
\end{align}
where $\rho_m \geq 0$ is a trainable parameter that determines the portion of the total rate assigned to the $m$-th SF. The constraint in \eqref{eq:rho} is directly derived from \eqref{eq:C_def_analog}. With this parameterization, problem ${\bf P3}$ is reformulated as a \textit{rate allocation problem} with optimization parameters $\{\rho_m\}_{\forall}$. 

The parameter $\rho_m$ can be readily implemented as 
\begin{align}\label{eq:comp_analog1}
\rho_m = \frac{|v_m|^2}{\sum_{i=1}^M |v_i|^2}, 
\end{align}
where $v_m\in\mathbb{R}$ denotes a trainable raw parameter. By the definition of $C_m$ in \eqref{eq:C_def_analog}, the noise variance is given by 
\begin{align}
    \sigma_{{\rm w},m}^2 = \frac{\sigma_{{\rm z},m}^2}{2^{2\rho_m I_{\rm max}}-1}.
\end{align}
Then, the training for the SF channel is realized as
\begin{align}\label{eq:comp_analog2}
    \hat{z}_m = z_m + \sigma_{{\rm w},m}\nu_m,~\nu_m\sim\mathcal{N}(0,1),~\forall m.
\end{align}
\textcolor{black}{
Here, the noise variance of the AWGN channel acts as a bias term in conventional DNNs. As a result, it can be readily optimized using standard neural network optimizers. Meanwhile, in our training, while the AWGN model is adopted as a convenient abstraction for training, it can be extended to more structured channel models, such as correlated Gaussian noise.}

\textcolor{black}{In our training, only $M$ additional parameters $\{\rho_m\}_{\forall m}$ are introduced. In practice, $M$ is sufficiently small compared to the number of enc-dec parameters. Moreover, as shown in \eqref{eq:comp_analog1}--\eqref{eq:comp_analog2}, the additional computations required for the SF channel optimization are purely element-wise operations and do not involve large-scale matrix multiplications. Therefore, the proposed method incurs only a marginal increase in computational complexity compared to conventional DeepJSCC.}

\subsection{End-to-End Training for Digital SC}
In digital SCs, the SF channel is modeled as parallel BSCs. The optimization problem is formulated as
\begin{align}
    ({\bf P4})~~&\min_{{\bm \theta}_{\rm enc},{\bm \theta}_{\rm dec},{\bm \mu}}  \mathbb{E}[\|{\bm x}-\hat{\bm x}\|^2],\label{eq:P6_obj} \\
    &~~~~~\text{s.t.}~~~~I({\bm b};\hat{\bm b})\leq I_{\rm max}. \label{eq:P6_const_org}
\end{align}
The remaining procedures are similar to those in Sec.~\ref{Sec:Proposed_analog}. The mutual information is decomposed under the mean-field assumption, and an upper bound is obtained as
\begin{align}
    I({\bm b};\hat{\bm b}) \leq \sum_{n=1}^B \left(1-H_2(\mu_n)\right),
\end{align}
where $H_2(u)=-u\log_2 u -(1-u)\log_2(1-u)$ for $0\leq u\leq 0.5$. The communication rate of the $n$-th bit is defined as 
\begin{align}\label{eq:C_def_digital}
    C_n = 1-H_2(\mu_n),
\end{align}
subject to the following constraints: 
\begin{align}\label{eq:P6_const}
    \sum_{n=1}^B C_n = I_{\rm max},\quad 0\leq C_n\leq 1,
\end{align}
where the first constraint is derived from \eqref{eq:P6_const_org}, and the second constraint comes from $0\leq\mu_n \leq 0.5$. The rate allocation problem for digital SC is formulated by parameterizing
\begin{align}
    C_n = \rho_n I_{\rm max},~~\text{s.t. }\sum_{n=1}^B \rho_n = 1,~0\leq\rho_n\leq\frac{1}{I_{\rm max}}.
\end{align}
The parameter $\rho_n$ can be implemented as 
\begin{align}
    \rho_n = \frac{|v_n|^2 + \alpha}{\sum_{i=1}^B (|v_i|^2+\alpha)},
\end{align}
where $\alpha = \max\left(\frac{I_{\rm max}\max_i |v_i|^2 - \sum_i |v_i|^2}{B-I_{\rm max}},0\right)$ \cite{DLMPC}. From the definition of $C_n$ in \eqref{eq:C_def_digital}, the bit-flip probability of the $n$-th BSC is given by
\begin{align}
    \mu_n &= H_2^{-1}(1-\rho_nI_{\rm max}) \nonumber \\
    &\approx \frac{1}{2}-\sqrt{a\rho_nI_{\rm max} - b(\rho_nI_{\rm max})^2 - c(\rho_nI_{\rm max})^3 },
\end{align}
where $a=\frac{\log 2}{2}$, $b=\frac{(\log 2)^2}{6}$, and $c=a-b-\frac{1}{4}$. The approximation is used since $H_2^{-1}(\cdot)$ has no closed-form expression; it is obtained by performing a Taylor expansion of $H_2(u)$ around $u=0.5$, followed by series reversion.

Training is realized under the relaxed BSC model, given by
\begin{align}
    \hat{b}_n = \frac{(2b_n-1)\tilde{e}_n + 1}{2} \in [0,1],
\end{align}
where 
\begin{align}\label{eq:tilde_e}
    \tilde{e}_n = -\tanh\!\left(
        \frac{1}{\tau}\Big(
        \log\frac{\mu_n}{1-\mu_n} + \log\frac{u_n}{1-u_n}
        \Big)
    \right),
\end{align}
$u_n \sim \mathcal{U}(0,1)$ is a random variable, and $\tau$ is a temperature parameter \cite{BlindSC}. 
The relaxation is used to compute a gradient of $\mu_n$ with respect to a given loss function. Consequently, $\mu_n$ (or $\rho_n$) is jointly optimized with the enc-dec. 
\textcolor{black}{Meanwhile, similar to the analog case, digital SCs also introduce only $B$ additional parameters, and the associated computations are purely element-wise. Therefore, the resulting increase in computational complexity is marginal.}


\vspace{2mm}
{\bf Remark 2 (Adaptation to Various Communication Environments):} 
Recall that, in Sec.~\ref{Sec:Comp}, we have discussed the trade-off between the mutual information limit $I_{\rm max}$ and the MSE. This naturally extends to SCs as a trade-off between $I_{\rm max}$ and the task performance, as demonstrated in Sec.~\ref{Sec:Simul}. To handle various communication environments, multiple enc-dec and SF channel pairs can be trained under different mutual information limits. 
In Sec.~\ref{Sec:PHY_Cali}, we introduce a communication strategy that adaptively selects an appropriate SF channel for a given communication environment. 

{\bf Remark 3 (Comparison to Prior Work in \cite{BlindSC}):} 
A similar approach was also studied in our prior work \cite{BlindSC}, where a BSC-based SF channel was optimized via end-to-end training. However, the optimization relied on a heuristically designed loss function, rather than capturing or constraining the mutual information of the SF channel. Consequently, \cite{BlindSC} did not establish a theoretical connection between the SF channel and practical communication systems. Moreover, its validation was restricted to digital SC, raising concerns about its scalability to other forms of SC scenarios, e.g., analog SC. The advantage of our mutual-information-constrained approach over the heuristic approach in \cite{BlindSC} will be further discussed in Sec.~\ref{Sec:Simul}.

\section{Proposed PHY Calibration for Realizing the Trained SF Channel}\label{Sec:PHY_Cali}
\textcolor{black}{The training framework in Sec.~\ref{Sec:Train} produces the optimized SF channel by imposing a mutual information constraint, which captures the effects of various communication constraints in an integrated manner during training. However, this abstraction does not directly guarantee that the trained SF channel can be realized under practical communication settings because it does not explicitly account for communication constraints such as total transmit power. To address this issue, the communication parameters must be calibrated so that the SF channel observed during transmission aligns with the optimally trained one while satisfying communication constraints.} We refer to this process as \textit{PHY calibration}. In this section, we present PHY calibration strategies for two communication settings: (i) single-user analog SCs and (ii) multi-user digital SCs.

\subsection{Single-User Analog SCs}\label{Sec:PHY_Cali_A}
Consider the SF channels trained for different mutual information limits $\{I_{\rm max}^{(u)}\}_{u=1}^U$, satisfying $I_{\rm max}^{(1)} > \cdots > I_{\rm max}^{(U)}$, as discussed in {\bf Remark 1}. The corresponding losses $\{L^{(u)}\}_{u=1}^U$ follow $L^{(1)} < \cdots < L^{(U)}$. Our objective for PHY calibration is to jointly select a proper SF channel and the transmit power. The optimization problem is formulated as
\begin{align}
    ({\bf P5})~&\min_{\{{p}_m\}_{\forall m},\,u}~~ (1-w_0)L^{(u)} + w_0 \sum_{m=1}^{M} {p}_m \sigma_{{\rm z},m}^2, \\
    \text{s.t.}~~ &\frac{2|h_t|^2 {p}_m\sigma_{{\rm z},m}^2}{ \sigma^2 } \geq \overline{\rm SNR}_{m}^{(u)},~\forall m,t, \label{eq:P7_const1}~\sum_{m=1}^{M} {p}_m\sigma_{{\rm z},m}^2 \leq P_{\rm tot}, 
\end{align}
where $p_m\sigma_{{\rm z},m}^2$ represents the average transmit power used for sending the $m$-th SF, and $w_0 \in [0,1]$ controls the trade-off between the task loss and the total transmit power. The target SNR of the $m$-th SF in the $u$-th SF channel, denoted by $\overline{\rm SNR}_{m}^{(u)}$, is defined as 
\begin{align}
    \overline{\rm SNR}_{m }^{(u)}\triangleq \frac{\sigma_{{\rm z},m}^2}{(\sigma_{{\rm w},m}^{(u)})^2},
\end{align}
where $(\sigma_{{\rm w},m}^{(u)})^2$ is the trained noise variance of the $m$-th SF in the $u$-th SF channel. In the first constraint, $\frac{2|h_t|^2 p_m \sigma_{{\rm z},m}^2}{\sigma^2}$ represents the actual SNR of $z_m$ during transmission. This constraint ensures alignment between the target and actual SNRs, thereby improving the reliability of task performance.

To solve problem ${\bf P5}$, an auxiliary variable is precomputed as
\begin{align}
    \tau_m^{(u)} = \frac{\overline{\rm SNR}_{m}^{(u)}}{2\sigma_{{\rm z},m}^2},~ \forall m,u.
\end{align}
For each $u$, $\tau_m^{(u)}$ is sorted in descending order with respect to $m$ in advance. When communication begins, the channel-gain-to-noise-power ratio $\tfrac{|h_t|^2}{\sigma^2}$ is also sorted in descending order. Here, the indices $m$ and $t$ are retained after sorting for notational simplicity. The required power coefficient is then computed as
\begin{align}\label{eq:PHY_analog_p_m_u}
    \bar{p}_m^{(u)} = \frac{\tau_m^{(u)}\sigma^2}{|h_t|^2},\quad t=\frac{\lceil m \rceil}{2}.
\end{align}
The sorting above assigns SFs with higher $\tau_m^{(u)}$ to stronger channels, thereby reducing the total transmit power.
After obtaining $\bar{p}_m^{(u)}$, the optimal SF channel index is determined as 
\begin{align}\label{eq:PHY_analog_u_star}
    u^\star = \argmin_{u} \left(1-w_0)L^{(u)} + w_0 P_{\rm req}^{(u)}: P_{\rm req}^{(u)} \leq P_{\rm tot}\right),
\end{align}
where $P_{\rm req}^{(u)} = \sum_{m=1}^{M} \bar{p}_m^{(u)}\sigma_{{\rm z},m}^2$. The optimal power coefficient is given by $\bar{p}_m^{(u^\star)}$. 

The proposed PHY calibration for analog SC has several notable features.
First, since $\tau_m^{(u)}$ is pre-shared between the transmitter and the receiver, the optimal power coefficient and SF channel can be computed locally once $\tfrac{|h_t|^2}{\sigma^2}$ is obtained.
Therefore, no additional communication overhead is required other than sharing $\tfrac{|h_t|^2}{\sigma^2}$\footnote{The channel-gain-to-noise-power ratio $\tfrac{|h_t|^2}{\sigma^2}$ can be estimated using standard pilot-based techniques or feedback mechanisms\cite{goldsmith2005wireless}. When the channel coherence time is sufficiently large, only a small number of ratios need to be estimated or fed back, resulting in marginal communication overhead.}  for reconstructing $\hat{\bm z}$ and $\hat{\bm x}$. 
\textcolor{black}{Second, the proposed method incurs very low computational complexity, as the optimal transmit power coefficients are obtained in closed form and require only simple arithmetic operations for each SF.}
Finally, the method can be readily extended to an interference-free multi-user scenario, in which each user independently adjusts its transmit power based on its own trained target SNRs. 
\textcolor{black}{Meanwhile, in our PHY calibration strategy, only a total power constraint is imposed. Nevertheless, the framework can be extended to various practical constraints. For example, per-time-slot power constraints can be handled via power clipping.}

\subsection{Multi-User Digital SCs}\label{Sec:Digital_Target_Realize}
We consider a multi-user digital SC where $K$ users transmit different images to a single base station (BS). The channels of all users are assumed to be independent and remain constant during the transmission of all symbols.
For the $k$-th user, the SF channels trained for different mutual information limits $\{I_{{\rm max},k}^{(u_k)}\}_{u_k=1}^{U_k}$, satisfying $I_{{\rm max},k}^{(1)} > \cdots > I_{{\rm max},k}^{(U_k)}$, are given. The corresponding losses $\{L_k^{(u_k)}\}_{u_k=1}^{U_k}$ follow $L_k^{(1)} < \cdots < L_k^{(U_k)}$. Our objective for PHY calibration is to jointly determine a proper SF channel, the transmit power, and the modulation levels. The optimization problem is formulated as 
\begin{align}
    &({\bf P6})~~\min_{  \big\{\{p_{t,k}\}_{\forall t}, u_k, m_k\big\}_{\forall k}  }~\sum_{k=1}^K w_k L_k^{(u_k)} + w_0 \sum_{k=1}^K\sum_{t=1}^{T_k} p_{t,k} \\
    &\text{s.t.}~\bar{\mu}_{n,k}^{(u_k)} \geq {\rm BER}\left(p_{t,k},m_k,\frac{|h_{k}|^2}{\sigma^2}\right),~\forall k,n\in\{1,\cdots,B_k\}, \\ 
    &~\sum_{t=1}^{T_k} p_{t,k} \leq P_{{\rm tot}}^{(k)},~\forall k,~\sum_{k=1}^K T_k \leq T,~m_k\in\{2,4,6,\cdots\},~\forall k,\label{eq:P6_const_MU}
\end{align}
where $p_{t,k}$ is the transmit power for the $t$-th symbol, $m_k$ is the modulation level, $B_k$ is the number of transmitted bits, and $T_k = B_k/m_k$ is the corresponding symbol sequence length for the $k$-th user. The weighting factors $w_0$ and $w_k$ control the trade-off between the total power consumption and the task performance of each user.
In the first constraint, $\bar{\mu}_{n,k}^{(u_k)}$ denotes the trained (target) bit-flip probability of the $n$-th bit in the $u_k$-th SF channel. Each $n$-th bit is transmitted within the $t$-th symbol, where $t=\lceil{ n }/{m_k}\rceil$. 
The BER for this bit is defined as
\begin{align}
    {\rm BER}&\left(p_{t,k},m_k,\frac{|h_{k}|^2}{\sigma^2}\right) \triangleq a(m_k) {\rm erfc} \Bigg( \sqrt{\frac{c(m_k){p_{t,k}|h_{k}|^2}}{\sigma^2}} \Bigg)\nonumber \\&~~\qquad\qquad + b(m_k) {\rm erfc} \Bigg( 3\sqrt{\frac{c(m_k){ p_{t,k}|h_{k}|^2}}{\sigma^2}} \Bigg),
\end{align}
where $h_{k}\in\mathbb{C}$ is the channel coefficient of the $k$-th user, $a(m_k)=\frac{\sqrt{2^{m_k}} - 1}{\sqrt{2^{m_k}} \log_2 \sqrt{2^{m_k}}}$, $b(m_k)=\frac{\sqrt{2^{m_k}} - 2}{\sqrt{2^{m_k}} \log_2 \sqrt{2^{m_k}}}$, and $c(m_k)=\frac{3}{2(2^{m_k} - 1)}$ \cite{BER_example}. The second constraint limits the total power budget of each user. The third constraint guarantees that the total number of channel uses across all users does not exceed $T$, and the fourth constraint is the candidate modulation levels.

To solve problem ${\bf P6}$, we first sort $\bar{\mu}_{n,k}^{(u_k)}$ in descending order with respect to $n$ in advance, where the index $n$ is retained for notational simplicity. 
The sorted bit-flip probabilities are grouped by every $m_k$ bits, and the minimum value within each group is defined as
\begin{align}
    \bar{\mu}_{t,k}^{(u_k,m_k)} = \min_{n\in\{(t-1)m_k+1,\cdots, tm_k\}}\big\{\bar{\mu}_{n,k}^{(u_k)}\big\},
\end{align}
for $t\in\{1,\cdots,t_k(m_k)\}$, where $t_k(m_k)\triangleq B_k/m_k$. 
The sorting above groups bits with similar bit-flip probabilities. This helps reduce the total transmit power because the transmit power of each symbol is determined by the minimum bit-flip probability within its group, as described in below. 
Given $m_k$ and $\bar{\mu}_{t,k}^{(u_k,m_k)}$, an auxiliary variable is precomputed as 
\begin{align}
    \gamma_{t,k}^{(u_k,m_k)} = \min\{p : \bar{\mu}_{t,k}^{(u_k,m_k)} \geq {\rm BER}(p,m_k,1)\},
\end{align}
for all $(t,k,u_k,m_k)$, assuming $|h_k|^2/\sigma^2=1$. When communication begins, the actual channel-gain-to-noise-power ratio $\frac{|h_k|^2}{\sigma^2}$ is used to determine the required transmit power as 
\begin{align}
    \bar{p}_{t,k}^{(u_k,m_k)} = \frac{\gamma_{t,k}^{(u_k,m_k)}\sigma^2}{|h_k|^2}. 
\end{align} 
Under the total power constraint, the feasible set for the $k$-th user is defined as
\begin{align}
    \Omega_k = \{(u_k,m_k): P_{{\rm req},k}^{(u_k,m_k)} \leq P_{\rm tot}^{(k)}\},
\end{align}
where $P_{{\rm req}, k}^{(u_k,m_k)} = \sum_{t=1}^{T_k} \bar{p}_{t,k}^{(u_k,m_k)}$. 
For each feasible pair $(u_k,m_k)\in\Omega_k$, the corresponding objective value is given by
\begin{align}
    J_k(u_k,m_k) = w_k L_k^{(u_k)} + w_0 P_{{\rm req},k}^{(u_k,m_k)}.  
\end{align}

For notational convenience, we redefine
\begin{align}
    t_{k,j} = t_k(m_k), \quad J_{k,j} = J_k(u_k,m_k),
\end{align}
where $j\in\{1,\dots,|\Omega_k|\}$ indexes each feasible pair $(u_k,m_k)\in\Omega_k$. Then, problem ${\bf P6}$ can be reformulated as 
\begin{align}
    &({\bf P6}^\prime)~~\min_{  \{x_{k,j}\}_{\forall k,j}  }~\sum_{k=1}^K\sum_{j=1}^{|\Omega_k|} J_{k,j}x_{k,j} \\
    &\text{s.t. }\sum_{j=1}^{|\Omega_k|} x_{k,j}=1,~x_{k,j}\in\{0,1\}, \sum_{k=1}^K\sum_{j=1}^{|\Omega_k|} t_{k,j}x_{k,j} \leq T,
\end{align}
where the first two constraints ensure that exactly one candidate is selected from the feasible set $\Omega_k$ for the $k$-th user. The third constraint corresponds to the total channel-use constraint in \eqref{eq:P6_const_MU}. We note that problem ${\bf P6}^\prime$ is a conventional multiple-choice knapsack problem. 
\textcolor{black}{This is a well-studied combinatorial optimization problem, and many efficient solvers have been developed \cite{MCKP_book}. From a computational complexity perspective, the worst-case approach is exhaustive search, which evaluates all combinations across the candidate sets for each user. In practice, however, the number of candidate SF channels and modulation levels per user is small, resulting in moderate computational cost.}

In the proposed PHY calibration for multi-user digital SC, the optimal SF channel index and modulation level $(u_k^\star, m_k^\star)$ are first determined at the BS by solving ${\bf P6}^\prime$. The BS then transmits $(u_k^\star, m_k^\star)$ and $\frac{|h_k|^2}{\sigma^2}$ to each user. Upon receiving them, each user computes the optimal transmit power as $\bar{p}_{t,k}^\star=\frac{\gamma_{t,k}^{(u_k^\star,m_k^\star)}\sigma^2}{|h_k|^2}$, which can also be computed at the BS. Therefore, only a small amount of information needs to be exchanged.


\section{Simulation Results}\label{Sec:Simul}
In this section, we demonstrate the superiority of the proposed SF channel in SCs, using the MNIST \cite{MNIST}, CIFAR-$10$ \cite{CIFAR10}, and STL-$10$ \cite{STL10} datasets. \textcolor{black}{Unless otherwise stated, the enc-dec architecture follows the same configuration as in \cite{BlindSC}, except that the activation function of the last encoder layer is replaced with a sigmoid.}
The loss function is used as the MSE loss when evaluating with the PSNR, and the SSIM loss when evaluating with the SSIM \cite{DeepJSCC_Q}. 
\textcolor{black}{For MNIST and CIFAR-$10$, the number of training epochs is set to 50 for PSNR and 20 for SSIM, while 100 epochs are used for STL-$10$.} 
The batch size is fixed to $64$ for all datasets, and the Adam optimizer \cite{ADAM} is employed with an initial learning rate of $10^{-4}$. 


For performance comparison of analog SCs, we consider the following baselines.
\begin{itemize}
    \item {\bf DeepJSCC-A (Proposed SFC):} This framework integrates the proposed SF channel (SFC) optimization into the analog DeepJSCC framework of \cite{DeepJSCC}. 
    

    \item {\bf DeepJSCC-A (ENVC) \cite{DeepJSCC}:} This framework corresponds to the original analog DeepJSCC of \cite{DeepJSCC} without any SFC optimization. The SF channel is modeled as an equal-noise-variance channel (ENVC), in which all SFs are corrupted by Gaussian noise with the same variance. 

    \item {\bf DeepJSCC-A (ERC):} This variant modifies the conventional DeepJSCC by explicitly imposing an equal-rate constraint across all SFs. Specifically, the noise variance of the $m$-th SF is adjusted so that its communication rate satisfies $C_m = \frac{I_{{\rm max}}}{M}$. 
    \textcolor{black}{
    \item {\bf DeepJSCC-A (IB) \cite{IB} / DeepJSCC-A (IB-SA) \cite{IB_SA}:} Both baselines train the encoder–decoder using an information bottleneck (IB)-based loss function. During communication, the baseline in [34] evaluates the robustness of SFs to noise and allocates SFs with lower robustness to stronger subchannels.}
\end{itemize}

For performance comparison of digital SCs, we consider the following baselines.
\begin{itemize}
    \item {\bf DeepJSCC-D (Proposed SFC):} This framework incorporates the proposed SF channel optimization into the digital DeepJSCC of \cite{NECST}. 

    \item {\bf DeepJSCC-D (ENVC = ERC) \cite{NECST}:} This framework can be regarded as a quantized version of DeepJSCC-A (ENVC), extending the one-bit quantization process in \cite{NECST} to a multi-bit representation. For training, it adopts multiple BSCs with an equal bit-flip probability applied to all bits, resulting in equal rate allocation. 

    \item {\bf BlindSC \cite{BlindSC}:} This framework corresponds to the digital SC framework in \cite{BlindSC}. All bit-flip probabilities are initialized equally to satisfy the mutual information limit $I_{\rm max}$, and the regularization weight is tuned so that the constraint is maintained at the final training epoch.

\end{itemize}
All digital SC frameworks use an 8-bit uniform quantizer for the encoder output. 

\vspace{2mm}

\begin{figure}[t]
    \centering 
    \subfigure[\textcolor{black}{PSNR vs. $I_{\rm max}$ (Analog SC)}]{
    {\epsfig{file=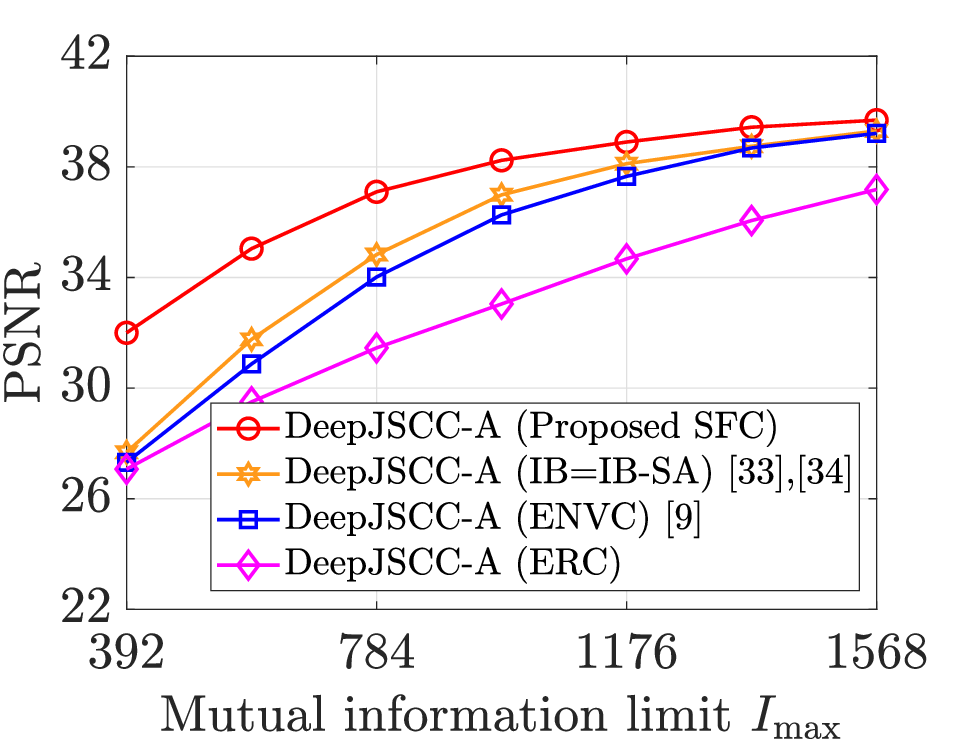, width=4.5cm}}
    }\hspace{-0.7cm}
    \subfigure[\textcolor{black}{PSNR vs. $M$ (Analog SC)}]{
    {\epsfig{file=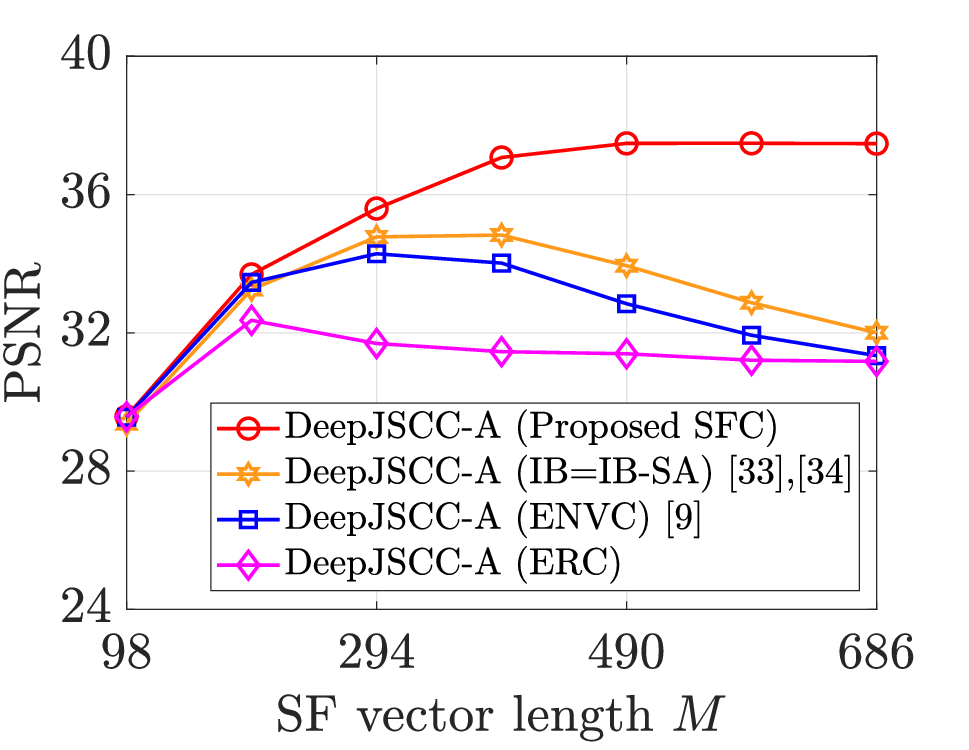, width=4.5cm}}
    }\\
    {\textcolor{black}{\vspace{-3mm}\caption{PSNR curves over the mutual information limit $I_{\rm max}$ and the SF vector length $M$ for analog SCs on the MNIST dataset.}\label{fig:SC_A_C_M}}}\vspace{-3mm}
\end{figure}

Fig.~\ref{fig:SC_A_C_M} shows the PSNR performance of analog SCs on the MNIST dataset for different values of the mutual information limit $I_{\rm max}$ and the SF vector length $M$. In Fig.~\ref{fig:SC_A_C_M}(a), $M$ is fixed to $392$ (corresponding to $N/M=2$), while in Fig.~\ref{fig:SC_A_C_M}(b), $I_{\rm max}$ is fixed to $784$. Similar to the Gaussian case, Fig.~\ref{fig:SC_A_C_M}(a) shows that the proposed SFC consistently achieves the highest PSNR across all values of $I_{\rm max}$. This indicates that the proposed SFC utilizes the available mutual information more effectively than the baselines by optimizing the SF channel. In Fig.~\ref{fig:SC_A_C_M}(b), when $M$ is small, all schemes yield relatively low PSNR due to strong compression. However, as $M$ increases, the PSNR of the proposed SFC gradually improves and eventually converges. This is because a larger $M$ preserves more information from the input data, but the gains diminish due to the limited mutual information. \textcolor{black}{In contrast, the ENVC, ERC, and IB baselines initially show an increase in PSNR but begin to degrade as $M$ becomes large.} This degradation occurs because increasing $M$ forces stronger noise to be assigned to all SFs, thereby distorting even the task-critical SFs.

\begin{figure}[t]
    \centering 
    \subfigure[\textcolor{black}{PSNR vs. $I_{\rm max}$ (Digital SC)}]{
    {\epsfig{file=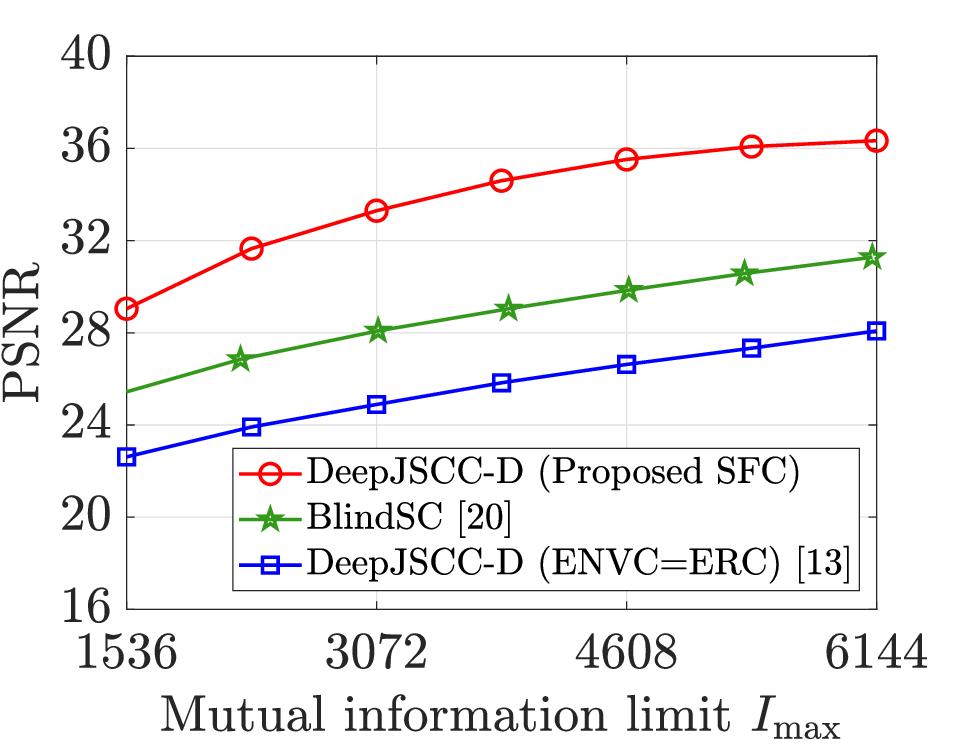, width=4.5cm}}
    }\hspace{-0.7cm}
    \subfigure[\textcolor{black}{PSNR vs. $B$ (Digital SC)}]{
    {\epsfig{file=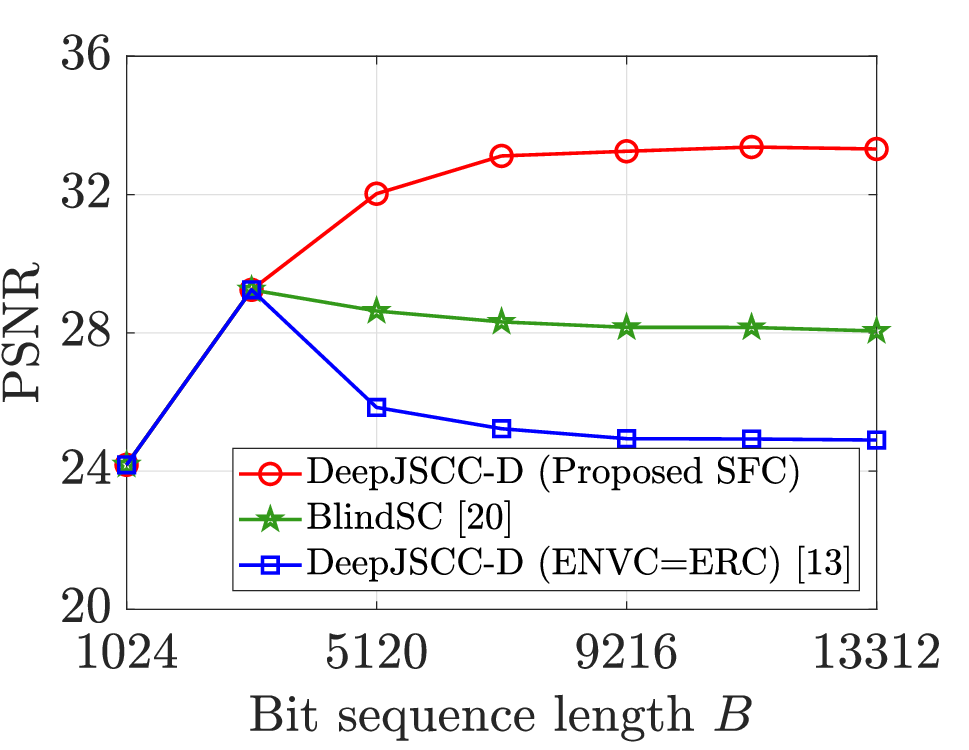, width=4.5cm}}
    }\\
    {\textcolor{black}{\vspace{-3mm}\caption{PSNR curves over the mutual information limit $I_{\rm max}$ and the SF vector length $B$ for digital SCs on the CIFAR-$10$ dataset.}\label{fig:SC_D_C_M}}}  \vspace{-3mm}
\end{figure}

\textcolor{black}{
Fig.~\ref{fig:SC_D_C_M} shows the PSNR performance of digital SCs on the CIFAR-$10$ dataset for different values of the mutual information limit $I_{\rm max}$ and the bit sequence length $B$. The enc-dec architecture follows a Swin Transformer-based SwinJSCC in \cite{SwinJSCC}. In Fig.~\ref{fig:SC_D_C_M}(a), $B$ is fixed to $12288$ (corresponding to $8N/B=2$), while in Fig.~\ref{fig:SC_D_C_M}(b), $I_{\rm max}$ is fixed to $3072$. 
}
In line with the Gaussian and analog SC results, Fig.~\ref{fig:SC_D_C_M}(a) shows that the proposed SFC consistently outperforms the other baselines over the entire range of $I_{\rm max}$.
\textcolor{black}{In Fig.~\ref{fig:SC_D_C_M}(b), when $B\leq3072$, the bit sequence length $B$ is smaller than or equal to $I_{\rm max}$.}
In this case, the communication becomes error-free, and all schemes achieve identical PSNR values. 
Meanwhile, the comparison with BlindSC demonstrates that the proposed SFC achieves superior performance by leveraging an information-theoretic optimization instead of heuristic loss design.

\begin{figure}[t]
    \centering 
    \subfigure[\textcolor{black}{Analog SC}]{
    {\epsfig{file=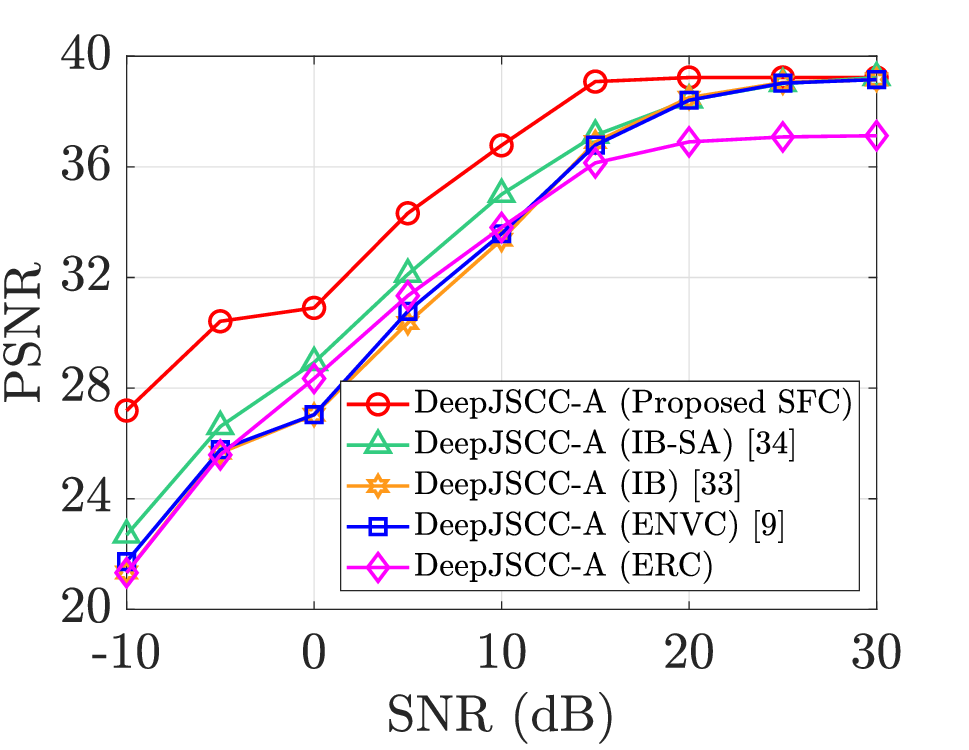, width=4.5cm}}
    }\hspace{-0.7cm}
    \subfigure[\textcolor{black}{Digital SC}]{
    {\epsfig{file=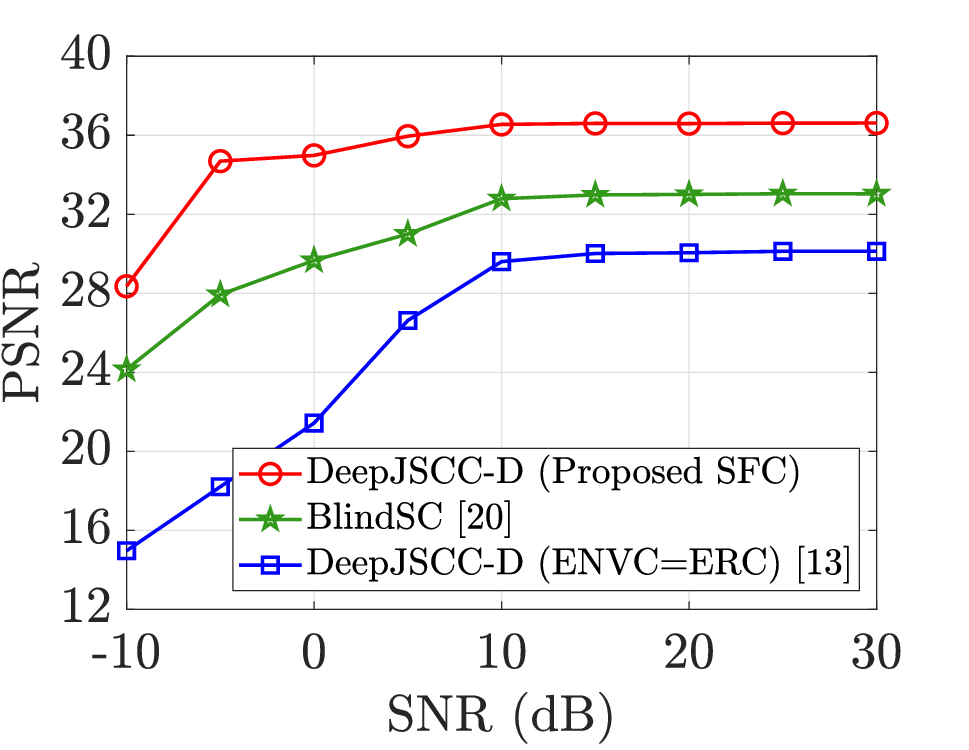, width=4.5cm}}
    }\\
    {\textcolor{black}{\vspace{-3mm}\caption{PSNR curves over the SNR for single-user analog and digital SCs on the MNIST dataset.}\label{fig:MNIST_SNR}}}\vspace{-3mm}
\end{figure}

\textcolor{black}{Fig.~\ref{fig:MNIST_SNR} shows the PSNR performance of single-user analog and digital SCs on the MNIST dataset for different values of SNR.} \textcolor{black}{In this simulation, we set $P_{\rm tot}=10^4$ and $w_0 \ll 1$. For analog SC, $I_{{\rm max}}^{(u)}=392u,~u\in\{1,2,3,4\}$. For digital SC, $I_{{\rm max}}^{(v)}=392v,~v\in\{1,3,5,7\}$, and 4-QAM is used. For both SCs, the transmission is performed over $\lceil T/10 \rceil$ Rayleigh fading subchannels, each spanning 10 channel uses.} \textcolor{black}{For fair comparison, all schemes, except for IB-SA, follow the PHY calibration strategy in Sec.~\ref{Sec:PHY_Cali_A} with their respective target SNRs or BERs. For IB-SA, since there is no criterion to select the enc-dec pair for a given SNR, we evaluate multiple enc-dec pairs and report the best performance at each SNR.} The results show that the proposed SFC consistently achieves the highest PSNR across all SNR regimes. 
\textcolor{black}{Notably, the performance trend observed here aligns well with Figs.~\ref{fig:SC_A_C_M} and \ref{fig:SC_D_C_M}}. 
This consistency demonstrates that the optimized SF channel trained under the mutual information constraint can be faithfully realized in practical wireless environments through the proposed PHY calibration strategy. In other words, even though the training of the SF channel is performed in an abstract mutual-information domain, its performance advantage seamlessly transfers to real physical channels once the PHY calibration is applied. 

\begin{figure*}[t]
    \centering 
    \subfigure[\textcolor{black}{User 1 (MNIST)}]{
    {\epsfig{file=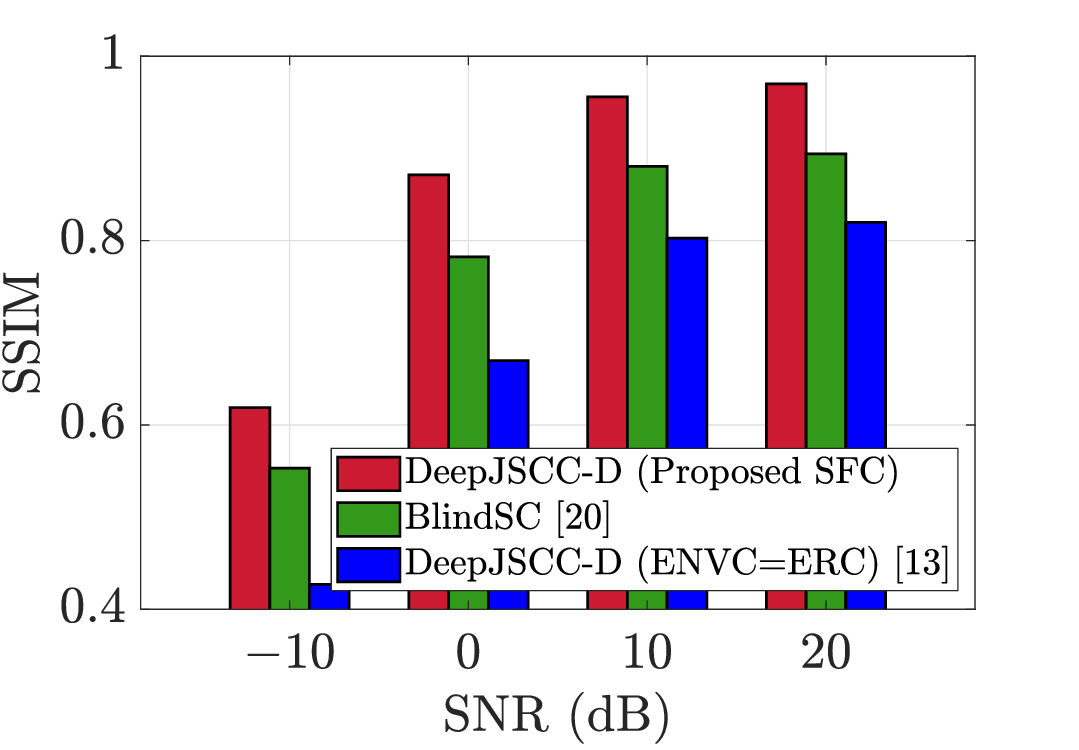, width=5.2cm}}
    }
    \subfigure[\textcolor{black}{User 2 (CIFAR-$10$)}]{
    {\epsfig{file=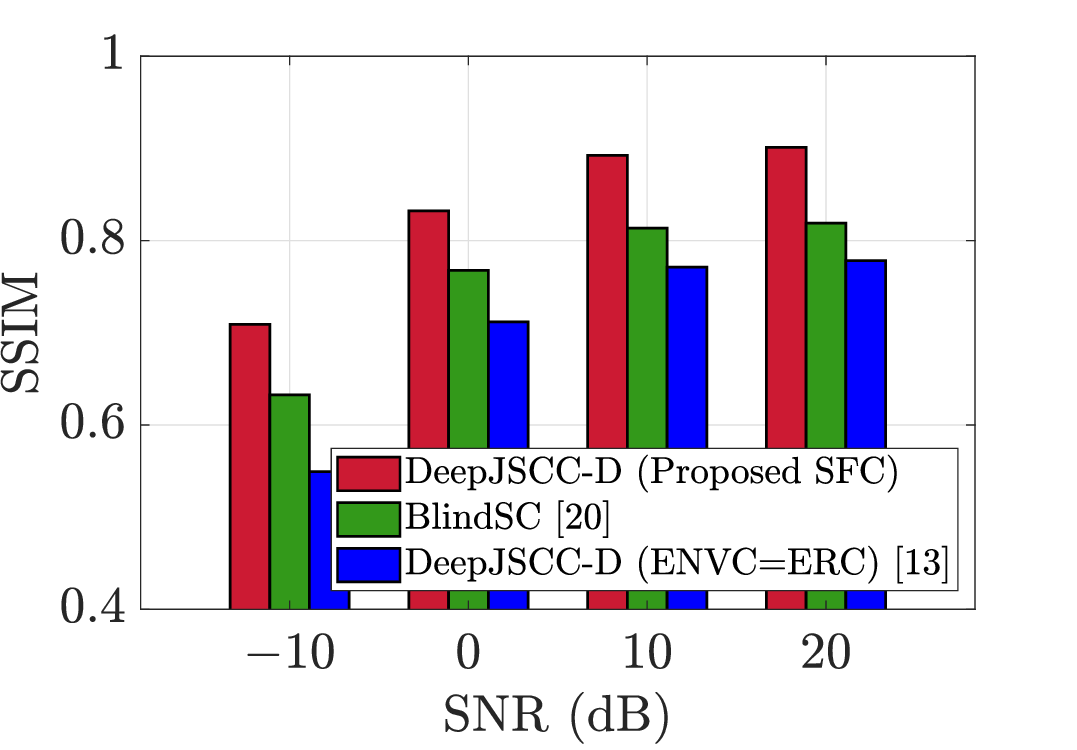, width=5.2cm}}
    }
    \subfigure[\textcolor{black}{User 3 (STL-$10$)}]{
    {\epsfig{file=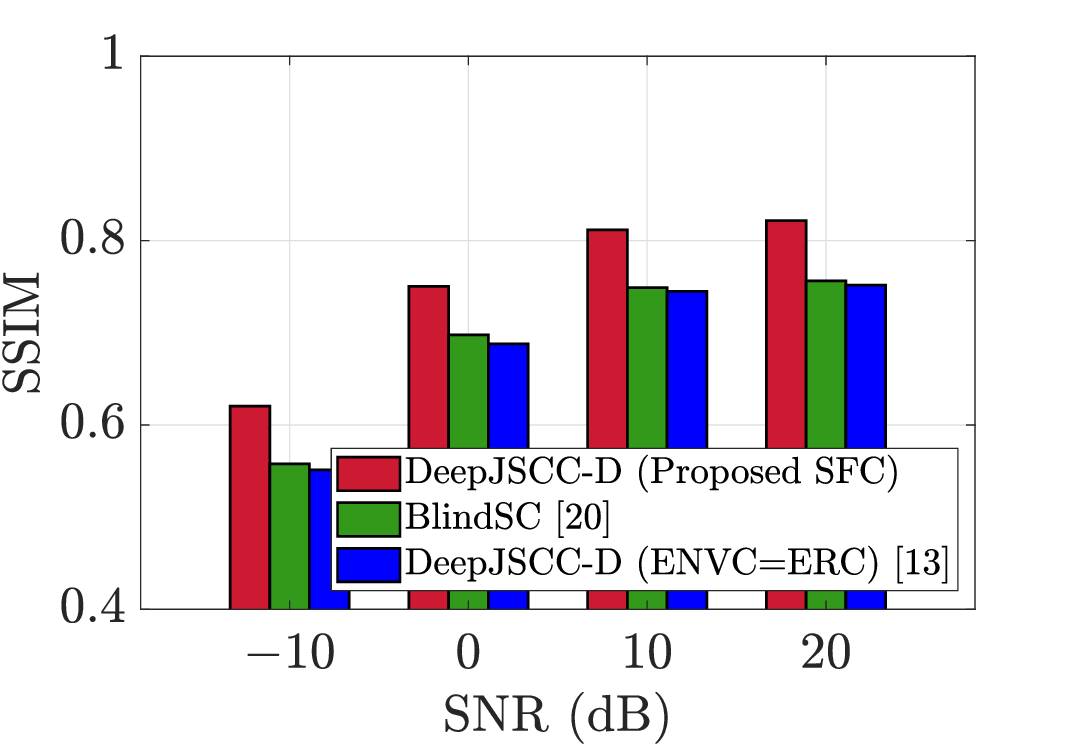, width=5.2cm}}
    }\\
    {\caption{SSIM performance over varying SNRs for multi-user digital SCs on MNIST, CIFAR-$10$, and STL-$10$ datasets.}\label{fig:MU}}\vspace{-3mm}
\end{figure*}

Fig.~\ref{fig:MU} shows the SSIM performance of multi-user digital SCs for different values of SNR. 
\textcolor{black}{In this simulation, we consider three users, where each user transmits images from a different dataset (MNIST, CIFAR-10, and STL-10).}
For each dataset, the SF vector length $M$ is chosen such that $N/M=8$ holds. The mutual information limits are set as $I_{{\rm max},k}^{(1)}=B_k/8$ and $I_{{\rm max},k}^{(2)}=B_k/2$ for all $k$, while the total transmit powers for the three users are set to $10^3$, $10^4$, $10^5$, respectively. \textcolor{black}{Each user experiences an independent Rayleigh fading channel.}
The other parameters are set as $T=10^4$, $w_0 \ll 1$, and $w_k=1, \forall k$.
For fair comparison, all schemes follow the PHY calibration strategy in Sec.~\ref{Sec:Digital_Target_Realize} with their respective target bit-flip probabilities, and the problem ${\bf P6}^\prime$ is solved using full search. 
The results show that the proposed SFC consistently achieves the highest SSIM across all SNR values and datasets. These results also confirm that the SF channel optimized under the mutual-information constraint can be faithfully realized even in digital SCs. 

\begin{figure}[t]
    \centering
    {\epsfig{file=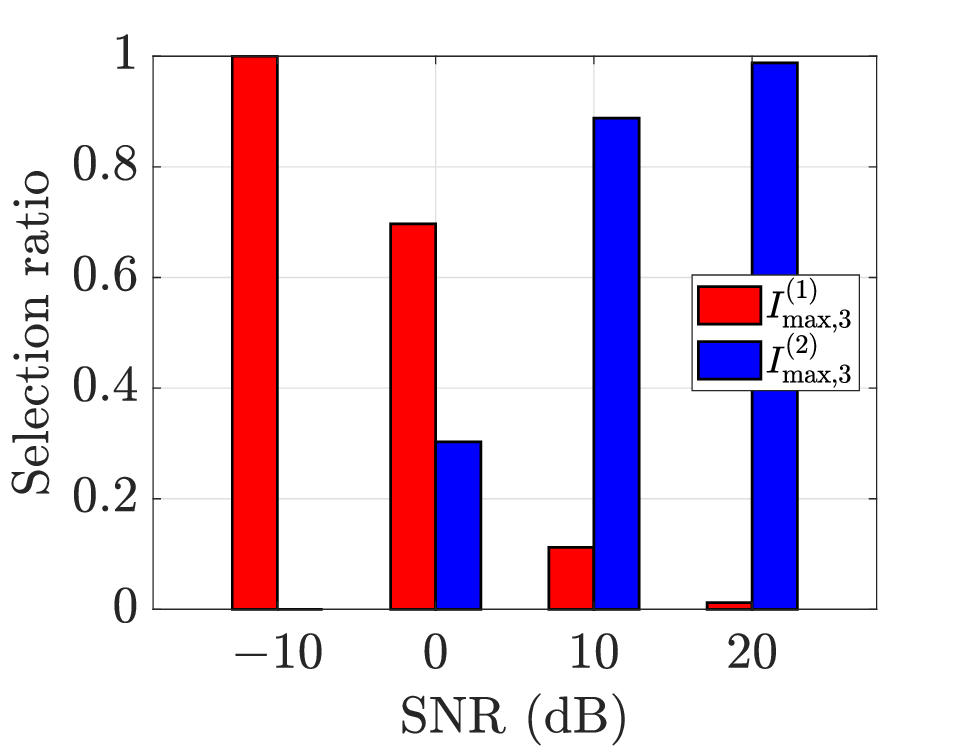, width=5cm}}
    \textcolor{black}{\vspace{-2mm}\caption{Selection ratios over the SNR for the user transmitting the STL-$10$ dataset in multi-user digital SCs.}\label{fig:Selection}}\vspace{-4mm}
\end{figure}

\textcolor{black}{Fig.~\ref{fig:Selection} shows the selection ratios of $I_{{\rm max},3}^{(1)}$ and $I_{{\rm max},3}^{(2)}$ over the SNR for the user transmitting the STL-$10$
dataset, under the same simulation setting in Fig.~\ref{fig:MU}.} The results show that the user mainly selects $I_{{\rm max},3}^{(1)}$ when the SNR is low and switches to $I_{{\rm max},3}^{(2)}$ as the SNR increases. This demonstrates that the proposed PHY calibration strategy adaptively chooses the appropriate enc-dec pair depending on the channel condition.

\section{Conclusion}\label{Sec:Conclusion}
In this work, we reinterpreted SC from the perspective of the \textit{encoder--SF channel--decoder} pipeline. 
Unlike conventional approaches that assume a fixed SF channel, we observed that the SF channel is configurable and can be optimized to improve task performance under a mutual information constraint. 
We first provided a theoretical analysis for Gaussian sources and linear enc-dec mappings, which revealed that the optimal SF channel allocates lower noise variance to sources with higher variance. 
Building upon this insight, we developed an end-to-end optimization strategy that jointly trains the DNN-based enc-dec and the SF channel, applicable to both analog and digital SCs.
We also proposed a PHY calibration strategy that enables the trained SF channel to be realized in practical wireless environments by adaptively controlling PHY parameters, including transmit power and modulation levels. 
Simulation results across various datasets demonstrated that the proposed SF channel optimization consistently achieves superior image reconstruction quality and adaptability under diverse channel conditions.

Future research may extend the proposed framework in several promising directions.
\textcolor{black}{First, jointly addressing source distribution generalization and channel adaptation remains an important direction for future research. In this direction, leveraging generative models could be a promising approach due to their ability to capture rich semantic priors \cite{10628028,10448235}. In particular, it would be interesting to investigate the relationship between transformer-based attention mechanisms and the optimized noise variance of the trained SF channel, as both can be interpreted as measures of semantic importance. Further, when multi-modal generative models are employed, how to design and optimize the SF channel remains an open problem.}
Second, developing advanced PHY calibration techniques such as beamforming, reconfigurable intelligent surfaces, and non-orthogonal multiple access could further enhance the scalability and real-world applicability \cite{NOMA_Jin}.  
Finally, exploring theoretical bounds for non-Gaussian models would deepen the information-theoretic understanding of the SF channel. 

\appendices
\section{Proof of Lemma~\ref{lem:A_struc}}\label{Apdx:Lem2}
Let ${\bm U}={\bm \Sigma}_{\rm xx}^{-1}$ and ${\bm V}={\bm A}^{\sf T}{\bm \Sigma}_{{\rm w}{\rm w}}^{-1}{\bm A}$, which are positive semidefinite matrices. Then, it holds that
\begin{align}\label{eq:eigen_equal}
    {\rm Tr}\big(({\bm U} + {\bm V})^{-1}\big) = \sum_{n=1}^N\frac{1}{\lambda_n({\bm U} + {\bm V})},
\end{align}
where $\lambda_n(\cdot)$ is the $n$-th largest eigenvalue. By the theorem of Lidskii and Wielandt \cite{ando1994majorizations}, we have
\begin{align}
    [\lambda_n({\bm U} + {\bm V})] \succ [\lambda_n({\bm U}) + \lambda_{N-n+1}({\bm V})],
\end{align}
where $[a_n] \triangleq (a_1, \ldots, a_N)$ denotes a vector, and $\succ$ represents the majorization relation between vectors. Since the mapping $(a_1, \ldots, a_N) \mapsto \sum_n\frac{1}{a_n}$ is Schur-convex, it follows that
\begin{align}
    \sum_{n=1}^N\frac{1}{\lambda_n({\bm U} + {\bm V})}
    \geq 
    \sum_{n=1}^N \frac{1}{\lambda_n({\bm U}) + \lambda_{N-n+1}({\bm V})}.
\end{align}
Substituting this bound into \eqref{eq:eigen_equal} yields
\begin{align}\label{eq:eigen_inequal}
    {\rm Tr}\big(({\bm U} + {\bm V})^{-1}\big)
    \geq 
    \sum_{n=1}^N \frac{1}{\lambda_n({\bm U}) + \lambda_{N-n+1}({\bm V})}.
\end{align}
Here, $\lambda_n({\bm U})$ and $\lambda_{N-n+1}({\bm V})$ are determined by the eigenvalues of ${\bm \Sigma}_{\rm xx}$ and ${\bm \Sigma}_{{\rm ww}}$, respectively. Hence, the right-hand side of \eqref{eq:eigen_inequal} depends only on ${\bm \Sigma}_{\rm xx}$ and ${\bm \Sigma}_{{\rm ww}}$. The left-hand side is a function of ${\bm A}$ and thus varies with its choice. The equality in \eqref{eq:eigen_inequal} can be achieved when ${\bm V}$ is diagonal with its entries arranged in the reverse order of those of ${\bm U}$. Taking this condition into account, together with the constraint ${\bm A}{\bm A}^{\sf T} = {\bm I}$, the optimal form of ${\bm A}$ is given by a partial permutation matrix.

\section{Proof of Lemma~\ref{lem:Opt_A_set}}\label{Apdx:Lem3}
Let $D(\mathcal{A})$ denote the objective value for an active set $\mathcal{A}$. For $p < q$ with $\sigma_{{\rm x},p}^2 > \sigma_{{\rm x},q}^2$, consider $q \in \mathcal{A}$, $p \notin \mathcal{A}$, and the swapped set $\mathcal{B} = (\mathcal{A} \setminus \{q\}) \cup \{p\}$. 
Under the optimal noise variance in \eqref{eq:opt_n_var}, the Lagrange multiplier can be represented as
$\lambda_{\mathcal{A}}
        = \frac{2}{e^{2C/|\mathcal{A}|}}\left(\prod_{k\in\mathcal{A}}\sigma_{{\rm x},k}^2\!\right)^{1/|\mathcal{A}|}.$
    Since $\mathcal{A}$ and $\mathcal{B}$ differ by one element, the ratio between the two multipliers is obtained as $\frac{\lambda_{\mathcal{B}}}{\lambda_{\mathcal{A}}}
        = \left(\frac{\sigma_{{\rm x},p}^2}{\sigma_{{\rm x},q}^2}\right)^{1/|\mathcal{A}|}
        = r^{1/|\mathcal{A}|},$
where $r\triangleq{\sigma_{{\rm x},p}^2}/{\sigma_{{\rm x},q}^2}$. Then, the difference between the objective values of $\mathcal{A}$ and $\mathcal{B}$ is given by 
\begin{align}
    D(\mathcal{B}) - D(\mathcal{A}) 
    &= \frac{\lambda_{\mathcal{B}} - \lambda_{\mathcal{A}}}{2}|\mathcal{A}|
       - (\sigma_{{\rm x},p}^2 - \sigma_{{\rm x},q}^2) \nonumber \\
    &= \frac{\lambda_{\mathcal{A}}}{2}|\mathcal{A}|(r^{1/|\mathcal{A}|}-1) - (\sigma_{{\rm x},p}^2-\sigma_{{\rm x},q}^2).
    \label{eq:proof_D_diff}
\end{align}
From Bernoulli’s inequality, $(1+a)^b \le 1+ab$ for $0 \le b \le 1$ and $a \geq -1$, it can be shown that $r^{1/|\mathcal{A}|}
    = (1 + r - 1)^{1/|\mathcal{A}|}
    \le 1 + \frac{r - 1}{|\mathcal{A}|}.$
Substituting this bound into \eqref{eq:proof_D_diff} yields
\begin{align}
    &D(\mathcal{B}) - D(\mathcal{A}) \leq \frac{\lambda_{\mathcal{A}}}{2}(r-1) - (\sigma_{{\rm x},p}^2 - \sigma_{{\rm x},q}^2) \nonumber\\
    &\overset{(a)}{<} \sigma_{{\rm x}, q}^2 (r - 1) - (\sigma_{{\rm x}, p}^2 - \sigma_{{\rm x}, q}^2)\nonumber =0,
\end{align}
where the inequality $(a)$ follows from $\lambda_{\mathcal{A}} < 2\sigma_{{\rm x},q}^2$ for the active components. 
Therefore, including a source with a larger variance $\sigma_{{\rm x},p}^2$ in the active set reduces distortion. 
    By repeatedly applying this argument, the optimal active set is determined as $\mathcal{A}^\star = \{1,2,\cdots, |\mathcal{A}|\}$. 
This completes the proof.

\bibliographystyle{IEEEtran}
\bibliography{Reference}
\end{document}